\newcommand{\bra}[1]{\langle #1|}
\newcommand{\ket}[1]{|#1\rangle}
\newcommand{\braket}[2]{\langle #1|#2\rangle}
\newcommand{\cent}[0]{\mbox{\textcent}}
\newcommand{\dollar}[0]{\$}
\newcommand{\mymatrix}[2]{\left( \begin{array}{#1} #2\end{array} \right)}
\newcommand{\myvector}[1]{\mymatrix{c}{#1}}
\newcommand{\tildesigma}{\widetilde{\Sigma}}
\newcommand{\tildew}{\tilde{w}}
\newcommand{\reg}{\mathsf{REG}}
\newcommand{\stoc}{\mathsf{SL}}
\newcommand{\affine}{\mathsf{AfL}}
\newcommand{\naffine}{\mathsf{NAfL}}
\newcommand{\baffine}{\mathsf{BAfL}}
\newcommand{\posonebaffine}{\mathsf{BAfL^0}}
\newcommand{\negonebaffine}{\mathsf{BAfL^1}}
\newcommand{\exactaffine}{\mathsf{EAfL}}
\newcommand{\excstoc}{\mathsf{SL^{\neq}}}
\newcommand{\coexcstoc}{\mathsf{SL^{=}}}
\newcommand{\excstocrational}{\mathsf{SL^{\neq}_\mathbb{Q}}}
\newcommand{\coexcstocrational}{\mathsf{SL^{=}_\mathbb{Q}}}
\newcommand{\excaffine}{\mathsf{AfL^{\neq}}}
\newcommand{\coexcaffine}{\mathsf{AfL^{=}}}
\newcommand{\nqal}{\mathsf{NQAL}}
\newcommand{\TSigma}{\widetilde{\Sigma}}
\newcommand{\modtwo}{\mathtt{MOD2^k}}
\newcommand{\modtwoyes}{\mathtt{0MOD2^k}}
\newcommand{\modtwono}{\mathtt{1MOD2^k}}
\newcommand{\modfour}{\mathtt{MOD4^k}}
\newcommand{\modfouryes}{\mathtt{0MOD4^k}}
\newcommand{\modfourno}{\mathtt{1MOD4^k}}
\newcommand{\interval}[2]{\mathtt{INTERVAL}_{#1,#2}}
\newcommand{\E}{\mathtt{E}}
\newcommand{\emptylang}{\mathtt{\varnothing}}
\newcommand{\less}[1]{\mathtt{LESS}_{#1}}
\newcommand{\even}{\mathtt{EVEN}}
\title{Language Recognition Power and Succinctness of Affine Automata\thanks{A preliminary and shorter version will appear in the proceedings of the 15th International Conference on Unconventional Computation and Natural Computation (UCNC 2016) \cite{VY16B}.}}
\author{Marcos Villagra\inst{1} \and Abuzer Yakary{\i}lmaz\inst{2}$^,$\thanks{Yakary{\i}lmaz was partially supported by CAPES with grant 88881.030338/2013-01 and some parts of this work was done while he was visiting Universidad Nacional de Asunci\'on in September 2015.}}
\institute{
  Universidad Nacional de Asunci\'on\\
  NIDTEC, Campus Universitario, San Lorenzo, C.P. 2619, Paraguay\\
  \email{mvillagra@pol.una.py}
  \and 
  National Laboratory for Scientific Computing\\
  Petr\'{o}polis, RJ, 25651-075, Brazil\\
  \email{abuzer@lncc.br}
}
\authorrunning{M. Villagra \and A. Yakary{\i}lmaz} 
\begin{document}

\maketitle

\begin{abstract}
In this work we study a non-linear generalization based on affine transformations of probabilistic and quantum automata proposed recently by D\'iaz-Caro and Yakary{\i}lmaz \cite{DCY16A} referred as affine automata. First, we present efficient simulations of probabilistic and quantum automata by means of affine automata which allows us to characterize the class of exclusive stochastic languages. Then, we initiate a study on the succintness of affine automata. In particular, we show that an infinite family of unary regular languages can be recognized by 2-state affine automata but the state numbers of quantum and probabilistic automata cannot be bounded. Finally, we present the characterization of all (regular) unary languages recognized by two-state affine automata.
\keywords{probabilistic automata, quantum automata, affine automata, state complexity, stochastic language, bounded-error, one-sided error}
\end{abstract}

\section{Introduction}

\subsection{Background}
Probabilistic and quantum computing are popular computation models with a very rich literature. Quantum computation, in particular, apparently violates the so-called \emph{strong Church-Turing thesis}, which states that all reasonable models of computation can be efficiently simulated by a probabilistic universal Turing machine. Evidence comes from the efficient solution to certain problems believed to be computationally hard, like factoring large composite numbers. Much research is devoted to pinpoint the exact source of this computational power of quantum computers. 

In this paper, we continue the work initiated in \cite{DCY16A} on a quantum-like classical computational model based on affine transformations. In particular, we make emphasis in finite-state automata, which is arguably the most simple computation model. Affine automata are finite-state machines whose transition operators are affine operators, hence the name.

There are several sources that apparently gives power to quantum computers, like quantum parallelism and entanglement. Several researchers may agree that quantum interference (using negative amplitudes), however, seems to be the key component. Therefore, the reason to study affine automata is to simplify the study of quantum interference in the context of a simple classical computation model.

Probabilistic automata are computation models whose transitions are governed by stochastic operators preserving the $\ell_1$-norm of a normalized vector with entries in the continuous set of real numbers $[0,1]$. Similarly, the transitions in a quantum automaton are governed by unitary operators preserving the $\ell_2$-norm of a normalized vector with entries over the complex numbers $\mathbb{C}$. The only restriction that affine transformations impose over finite-state machines is the preservation of barycenters of vectors with entries over the real numbers $\mathbb{R}$, or equivalently, preservation of the sum of all entries in a state vector. It is clear that any affine operator defined on non-negative real numbers is a stochastic operator. 

Since affine transformations are linear, the evolution of an affine automaton is linear.
Nonlinearity comes from a measurement-like operator (which we call weighting operator) that is applied at the end of every computation to determine the probability of observing an inner-state of the machine. We refer the reader to \cite{DCY16A} for the detailed explanations and discussions. A continuation of this paper appeared in \cite{BMY16A}.

\subsection{Contributions}
In this work we present the following results on affine automata language classes. First, in Section \ref{sec:pfa-sim} we show how to simulate a probabilistic automaton using an affine automaton (Theorem \ref{the:pfa-sim}). Then we use that simulation to show that any rational exclusive stochastic language can be recognized by positive one-sided bounded-error affine automata (Theorem \ref{the:sl-bafl}). This fact immediately implies a characterization of the language recognition power of nondeterministic quantum automata by one-sided bounded-error affine automata. In Section \ref{sec:mcqfa-sim} we show how to simulate an $n$-state quantum automaton exactly by an $(n^2+1)$-state affine automaton (Theorem \ref{the:qfa-sim}). 
In Section \ref{sec:succ} we study the state complexity (succintness) of affine automata. First, we show that the so-called unary counting problem can be computed by some bounded-error affine automata with constant state complexity (Theorem \ref{the:count}), whereas any bounded-error quantum automaton requires at least a logarithmic number of states. Second, we show the existence of a promise language that is solved exactly by an affine automaton with constant state complexity (Theorem \ref{the:mod-afa}), whereas any probabilistic automaton requires exponential state complexity. Finally, in Section \ref{sec:unary} we give a complete characterization of all (regular) unary languages recognized by two-state affine automata (Theorem \ref{the:cha}).

Affine transformations are arguably simpler to understand compared to unitary operators. Therefore, the characterizations given in terms of affine automata of quantum language classes present a simpler setting where to study and research the power of interference.


\section{Preliminaries}\label{sec:preliminaries}
We assume the reader is familiar with the common notation used in automata theory. For details on the models of probabilistic and quantum automata, we recommend references \cite{Paz71}, \cite{SayY14}, and \cite{AY15}. 

Let $ \Sigma $ be a finite alphabet, not containing $ \cent $ and $ \dollar $ called the left and right end-markers, respectively. The set of all the strings of finite length over $ \Sigma $ is denoted $ \Sigma^* $. We define $ \tildesigma = \Sigma \cup \{ \cent,\dollar \} $ and $ \tildew = \cent w \dollar $ for any string $w \in \Sigma^* $. For any given string $ w \in \Sigma^* $, $|w|$ denotes its length, $|w|_\sigma$ is the number of occurrences of the symbol $\sigma$, and $w_j$ is the $j$-th symbol of $w$. 


A \emph{probabilistic finite automaton} (or PFA) \cite{Rab63} is a 5-tuple
$P = (E,\Sigma,\{ A_{\sigma} \mid \sigma \in \tildesigma \},e_s,E_a)$,
where $ E = \{ e_1,\ldots,e_n \} $ is a finite set of inner states for some $ n \in \mathbb{Z}^+ $, $e_s \in E$ is the starting inner state, $ E_a \subseteq E $ is a set of accept inner states, and $ A_{\sigma} $ is the stochastic transition matrix for the symbol $ \sigma \in \widetilde{\Sigma} $. Any input $ w \in \Sigma^* $ is always given in the form $\tilde{w}=\cent w\$$ and it is scanned by $P$ from left to right, symbol by symbol.\footnote{This way of scanning an input tape is sometimes referred to as ``strict realtime.''} After scanning the $ j $-th symbol, the configuration state of $P$ is
$v_j = A_{\tildew_j} v_{j-1} = A_{\tildew_j} A_{\tildew_{j-1}} \cdots A_{\tildew_1} v_0$,
where $ 1 \leq j \leq |\tildew| $ and $ v_0 $ is the $s$-th element of the standard basis in $ \mathbb{R}^n $, referring to the initial state. The final configuration state is denoted $ v_f = v_{|\tildew|} $. The acceptance probability of $ P $ on $w$ is given by
$f_P(w) = \sum_{e_k \in E_a} v_f[k]$,
where $v_f[k]$ is the $k$-th entry of the vector $v_f$.

A \emph{quantum finite automaton} (or QFA) \cite{AY15} is a 5-tuple
$M = (Q,\Sigma,\{ \mathcal{E}_\sigma \mid \sigma \in \TSigma \},q_s,Q_a)$,
where $ Q = \{ q_1,\ldots,q_n \} $ is a finite set of inner states for some $ n \in \mathbb{Z}^+ $, $ \mathcal{E}_\sigma $ is a transition superoperator\footnote{A superoperator or quantum operator is a positive-semidefinite operation that maps density matrices to density matrices \cite{SayY14,AY15}.}  for a symbol $ \sigma\in \Sigma $, the inner state $ q_s $ is the initial state, and $ Q_a \subseteq Q $ is a set of accept states. For any given input $ w \in \Sigma^* $, the computation of $ M $ on $ w $ is given by
$\rho_j = \mathcal{E}_{\tilde{w}_j} ( \rho_{j-1} )$,
where $ \rho_0 = \ket{q_s}\bra{q_s} $ and $ 1 \leq j \leq |\tilde{w}| $. The final state is denoted $ \rho_f = \rho_{|\tilde{w}|} $. The accept probability of $ M $ on $ w $ is given by
$f_M(w) =  \sum_{q_j \in Q_a} \rho_f[j,j]$,
where $\rho_f[j,j]$ is the $j$-th diagonal entry of $\rho$. The most restricted model of QFA currently known is the so-called \emph{Moore-Crutchfield QFA} (or MCQFA) \cite{MC00}. An MCQFA is a 5-tuple
$M = (Q,\Sigma,\{ U_\sigma \mid \sigma \in \TSigma \},q_s,Q_a)$,
where all components are defined exactly in the same way as for QFAs except that $ U_\sigma $ is a  unitary transition operator for a symbol $ \sigma \in \Sigma$ acting on $span\{\ket{q} \mid q\in Q\}$. Physically, $M$ corresponds to a closed-system based on pure states.\footnote{Pures states are vectors in a complex Hilbert space normalized with respect to the $\ell_2$-norm.} For any given input $ w \in \Sigma^* $, the machine $M$ is initialized in the quantum state $ \ket{v_0} = \ket{q_s} $. Each step of a computation is given by $\ket{v_j} = U_{\tildew_j} \ket{v_{j-1}}$,
where $ 1 \leq j \leq |\tilde{w}| $. The final quantum state is denoted $ \ket{v_f} = \ket{v_{|\tilde{w}|}} $. The accept probability of $ M $ on $ w $ is
$f_M(w) =  \sum_{q_j \in Q_a} |\braket{q_j}{v_f}|^2$.
Note that the inner product $ \braket{q_j}{v_f} $ gives the amplitude of $ q_j $ in $ \ket{v_f} $.

If we restrict the entries in the transitions matrices of a PFA to zeros and ones we obtain a \emph{deterministic finite automaton} (or DFA). A DFA is always in a single inner state during the computation and the input is accepted if only if the computation ends in an accept state. A language is said to be recognized by a DFA if and only if any member of the language is accepted by the DFA. Any language recognized by a DFA is called a \emph{regular language} \cite{RS59} and the class of regular languages is denoted $\reg$. 

Let $ \lambda \in \lbrack 0,1) $ be a real number. A language $ L $ is said to be recognized by a PFA $P$ with \emph{cutpoint} $ \lambda $ if and only if
$L = \{ w \in \Sigma^* \mid f_P(w) > \lambda \}$.
Any language recognized by a PFA with a cutpoint is called a \emph{stochastic language} \cite{Rab63} and the class of stochastic languages is denoted $ \stoc $, which is a superset of $\reg$.  As a special case,  if $ \lambda = 0 $, the PFA is also called a \emph{nondeterministic finite automaton} (or NFA). Any language recognized by an NFA is also a regular language. 

A language $ L $ is said to be recognized by $ P $ with \emph{isolated cutpoint} $ \lambda $ if and only if there exists a positive real number $ \delta $ such that 
$ f_P(w) \geq \lambda + \delta $ for any $ w \in L $ and
$ f_P(w) \leq \lambda - \delta $ for any $ w \notin L $. When the cutpoint is required to be isolated, PFAs are not more powerful than DFAs; that is, any language recognized by a PFA with isolated cutpoint is regular \cite{Rab63}. 

Language recognition with isolated cutpoint can also be formulated as recognition with bounded error. Let $ \epsilon \in \lbrack 0,\frac{1}{2}) $. A language $ L $ is said to be recognized by a PFA $ P $ with error bound $ \epsilon $ if and only if $ f_P(w) \geq 1 - \epsilon $ for any $ w \in L $ and $ f_P(w) \leq \epsilon $ for any $ w \notin L $.
  
As a further restriction, if $ f_P(w) = 1 $ for any $ w \in L $, then we say that $P$ recognizes $L$ with \emph{negative one-sided bounded error}; if $ f_P(w) = 0 $ for any $ w \notin L $, then we say that $P$ recognizes $L$ with \emph{positive one-sided bounded error}. If the error bound is not specified, then we say that $ L $ is recognized by $ P $ with \emph{\lbrack negative/positive one-sided\rbrack\ bounded error}. 

A language $ L $ is an \emph{exclusive stochastic language} \cite{Paz71} if and only if there exists a PFA $P$  and a cutpoint $ \lambda \in [0,1]  $ such that 
$L = \{ w \in \Sigma^* \mid f_P(w) \neq \lambda \}$.
The class of exclusive stochastic languages is denoted by $ \excstoc $. Its complement class is denoted by $ \coexcstoc $ (that is $ L \in \excstoc$ iff $\overline{L}\in \coexcstoc $). Note that for any language in $ \excstoc $ we can choose as cutpoint any value between 0 and 1, but not 0 or 1,  because in that case  we can only recognize regular languages. Also notice that both $\excstoc$  and $ \coexcstoc$ are supersets of $ \reg $ (it is still open whether $ \reg $ is a proper subset of $ \excstoc \cap \coexcstoc $). 

In the case of QFAs, they recognize all and only regular languages with bounded-error \cite{LQZLWM12} and stochastic languages with cutpoint \cite{YS09C,YS11A}. The class of languages recognized by nondeterministic QFAs, however, is identical to $ \excstoc $. 

For any language class $ \tt C $, we use $ \mathtt{C}_{\mathbb{X}} $ to denote the subclass of $ \tt C $ when all transitions of the corresponding model are restricted to $ \mathbb{X} $.

\section{Affine Finite Automaton}\label{sec:affine}

In this section we define our model of finite-state machine based on affine transformations. We refer \cite{DCY16A} for the basics of affine systems.  An \emph{affine finite-state automaton}, or simply AfA,  is a 5-tuple
\[
  M = (E,\Sigma,\{ A_{\sigma} \mid \sigma \in \tildesigma\},e_s,E_a),
\]
where all the components are the same as in the definition of a PFA excepting that $ A_{\sigma} $ is an affine transformation matrix (each column sum is 1). Note that each configuration state of $ M $ is a column vector on $ \mathbb{R} $ satisfying that the summation of entries is always 1.  On input $ w \in \Sigma^* $, let $v_f$ be the final configuration state after scanning the right end-marker $\$$. Define the \emph{accept probability} of $M$ as
\begin{equation}\label{eq:acc-val}
  f_M(w) = \sum_{e_k \in E_a} \frac{| v_f[k] |}{|v_f|} \in [0,1],
\end{equation}
where each value contributes with its absolute value. More specifically, when $M$ is in the final state $v_f$, this vector is normalized with respect to the $\ell_1$-norm obtaining a new vector $v_f'$; thus, in order to obtain the accept probability we project the vector $v_f'$ on the subspace spanned by the accept inner states $E_a$ of $M$ and then taking the $\ell_1$-norm again, that is, the summation of the absolute value of each entry. 

Language recognition for $M$ is defined in the same way. Any language recognized by an AfA with cutpoint is called an \textit{affine language}. The class of affine languages is denoted $ \affine $. Any language recognized by an AfA with cutpoint 0 (called nondeterministic AfA or NAfA for short) is called a \textit{nondeterministic affine language}. The corresponding class is denoted $ \naffine $. A language is called an \emph{exclusive affine language} if and only if there exists an AfA $M$  and a cutpoint $ \lambda \in [0,1]  $ such that $L = \{ w \in \Sigma^* \mid f_M(w) \neq \lambda \}$. The class of exclusive affine languages is denoted by $ \excaffine $ and its complement class is denoted by $ \coexcaffine $. Any language recognized by an AfA with bounded error is called an \textit{bounded affine language}. The corresponding class is denoted $ \baffine $. If the error is positive one-sided (all non-members are accepted with value 0), then the corresponding language class is denoted $ \posonebaffine$, whereas for negative one-sided error (all members are accepted with value 1) the corresponding language class is denoted $ \negonebaffine $. Note that if $ L \in \posonebaffine $, then $ \overline{L} \in \negonebaffine $, and vice versa. Any language recognized by an AfA with zero-error is called \textit{exact affine language} and its corresponding language class is $ \exactaffine $.

\section{Simulation of Rational PFAs}\label{sec:pfa-sim}

In this section we present a simulation of PFAs by AfAs. Since 1-state PFAs are trivial, we focus on PFAs with two more states.
\begin{theorem}
	\label{the:pfa-sim}
	Any language $ L $ recognized by an $n$-state rational PFA with cutpoint $ \frac{1}{2} $ can be recognized by an $ (n+1) $-state integer AfA with cutpoint $ \frac{1}{2} $.
\end{theorem}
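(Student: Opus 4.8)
The plan is to simulate the PFA $P=(E,\Sigma,\{A_\sigma\},e_s,E_a)$ step by step inside an AfA $M$ with a single additional ``garbage'' state $e_{n+1}$, using integer matrices obtained by clearing denominators. Let $d$ be the least common multiple of the denominators of all entries of all $A_\sigma$, so that every $dA_\sigma$ is a nonnegative integer matrix whose column sums equal $d$. For each $\sigma\in\{\cent\}\cup\Sigma$ I would take the $(n+1)\times(n+1)$ block matrix
\[
  B_\sigma=\begin{pmatrix} dA_\sigma & 0\\ (1-d)\,\mathbf{1}^{T} & 1\end{pmatrix},
\]
where $\mathbf{1}^{T}$ is the all-ones row vector of length $n$. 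This is an integer affine matrix: the top $n$ entries of each of the first $n$ columns sum to $d$, so the garbage row entry $1-d$ restores each column sum to $1$, and the last column is a pure self-loop on $e_{n+1}$. Starting from $u_0=(v_0\,;\,0)$, an easy induction gives the invariant $u_t=(d^{\,t}v_t\,;\,1-d^{\,t})$, where $v_t$ is the configuration of $P$ after $t$ symbols. In particular every configuration stays integral and sums to $1$, and the garbage coordinate merely absorbs the scaling slack.

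The hard part is the final weighting. The naive choice $E_a'=E_a$ together with a self-loop end-marker produces the final vector $(d^{m}v_f\,;\,1-d^{m})$ with $m=|\tildew|=|w|+2$, whose $\ell_1$-norm is $2d^{m}-1$, so that $f_M(w)=\frac{d^{m}f_P(w)}{2d^{m}-1}$. Because the garbage coordinate has magnitude $d^{m}-1$, comparable to the whole accepting mass, this ratio concentrates near $f_P(w)/2$ and fails to separate the cutpoint $\frac12$; worse, the threshold it actually realizes depends on the input length $m$. So the genuine obstacle is to arrange the right end-marker transition and the accept set so that the cutpoint is preserved \emph{exactly}.

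My fix is to flip the signs of the rejecting coordinates at the right end-marker, exploiting that the weighting operator takes absolute values. I would use
\[
  B_{\$}=\begin{pmatrix} D\,(dA_{\$}) & 0\\ r^{T} & 1\end{pmatrix},
\]
where $D$ is the diagonal $\pm1$ matrix with $+1$ on accepting and $-1$ on rejecting states, and $r^{T}$ is the unique integer row restoring column sums to $1$, namely $r_j=1-d(2\alpha_j-1)$ with $\alpha_j=\sum_{k\in E_a}A_{\$}[k,j]$ (an integer because $d\alpha_j=\sum_{k\in E_a}dA_{\$}[k,j]$). Using $\sum_j\alpha_j\,(d^{\,m-1}v_{m-1})[j]=d^{\,m-1}f_P(w)$, a direct computation then gives a final vector $u_f$ whose accepting coordinates equal $+d^{m}v_f[k]$, whose rejecting coordinates equal $-d^{m}v_f[k]$, and whose garbage coordinate equals $1+d^{m}-2A$, where $A:=\sum_{k\in E_a}d^{m}v_f[k]=d^{m}f_P(w)$ is an integer. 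Keeping $E_a'=E_a$, the accepting mass is $A$ and $\|u_f\|_1=d^{m}+\lvert\,1+d^{m}-2A\,\rvert$, so
\[
  f_M(w)=\frac{A}{d^{m}+\lvert\,1+d^{m}-2A\,\rvert}.
\]

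Finally I would split into two cases. If $f_P(w)>\tfrac12$ then $2A\ge d^{m}+1$, the absolute value equals $2A-d^{m}-1$, and $f_M(w)=\frac{A}{2A-1}>\tfrac12$; if $f_P(w)\le\tfrac12$ then $2A\le d^{m}$, the absolute value equals $1+d^{m}-2A$, and $f_M(w)=\frac{A}{2d^{m}+1-2A}<\tfrac12$ (note this correctly rejects the boundary value $f_P(w)=\tfrac12$). Hence $M$ recognizes $L$ with cutpoint $\tfrac12$, has $n+1$ states, and uses only integer transitions, completing the argument. The only delicate points are checking that $r^{T}$ is integral and that the sign flip never contaminates the accepting coordinates, which is exactly why the garbage column of $B_{\$}$ must remain a pure self-loop; both follow from the column-sum bookkeeping above.
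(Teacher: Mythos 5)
Your proposal is correct and follows essentially the same route as the paper: clear denominators with a factor $d$, append a slack state whose row restores column sums so the configuration is $(d^t v_t\,;\,1-d^t)$, and use an integer end-marker operator so that the $\ell_1$-weighting turns the comparison of $f_P(w)$ with $\frac{1}{2}$ into a sign condition on the integer $2A-d^m$, with integrality supplying the gap $2A\geq d^m+1$ exactly as the paper exploits $d^{|\tildew|}(2f_P(w)-1)\in\mathbb{Z}$. The only difference is cosmetic: the paper first normalizes $P$ (via the end-markers) so the final PFA vector is $(f_P(w),1-f_P(w),0,\ldots,0)^T$ and then applies a subtraction operator $A''_\dollar$ collapsing everything into two coordinates, whereas you avoid that normal form by flipping the signs of the rejecting coordinates with a diagonal $\pm 1$ matrix and letting the slack coordinate carry $1+d^m-2A$ --- the same trick, verified by the same bookkeeping.
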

\begin{proof}
	Let $ P=(E,\Sigma,\{ A_\sigma \mid \sigma \in \TSigma \},e_s,E_s) $ be an $n$-state PFA defined with only rational numbers with $ n>1 $. With the help of end-markers, we can assume with no loss of generality that the initial state $ e_s = e_1 $ and $ E_a = \{ e_1 \} $. Moreover, for any given $ w \in \Sigma^* $, we can assume with no loss of generality that the final state vector of $ M $ is always
\[
	\myvector{f_P(w) \\ 1 - f_P(w) \\ 0 \\ \vdots \\ 0}.
\]

Using $ P $ as defined above, we construct an AfA $ MP = (E \cup \{e_{n+1}\},\Sigma,\{ B_\sigma \mid \sigma \in \TSigma \},e_1,\{e_1\}) $, where $n=|E|$. Let $d$ be the smallest positive integer such that for each $\sigma\in\Sigma$ the entries of the matrix $ d A_\sigma $ are integers. If $v_0$ is the initial state of $P$, for any string $ w $, we have that
\[
	\left(d A_{\tildew_{|\tildew|}} \right) \left( d A_{\tildew_{|\tildew|-1}} \right) \cdots \left( d A_{\tildew_{1}} \right) v_0 = d^{|\tildew|} \myvector{f_P(w) \\ 1 - f_P(w) \\ 0 \\ \vdots \\ 0} \in \mathbb{Z}^n.
\]
Define a new matrix $ A'_\sigma $ for each $ \sigma \in \TSigma $ as
\[
	A'_\sigma=\mymatrix{ c|c } { dA_\sigma & 0 \\ \hline \overline{1} & 1 },
\]
where $ \overline{1} $ is a row vector that makes the summation of each column under $dA_\sigma$ equal to 1. Then, for a given string $ w $, we have that
\[
	v'_f = A'_{\tildew_{|\tildew|}} A'_{\tildew_{|\tildew|-1}} \cdots A'_{\tildew_{1}} \myvector{v_0 \\ 0} = \myvector{d^{|\tildew|}  f_P(w) \\ d^{|\tildew|} \left( 1 - f_P(w) \right) \\ 0 \\ \vdots \\ 0 \\ 1 - d^{|\tildew|}} \in \mathbb{Z}^{n+1}.
\]
Using the vector $v_f'$, we can subtract the second entry from the first one and then sum everything else on the second entry by using an extra affine operator $ A''_{\tildew_{|\tildew|}} $ obtaining
\[
	v''_f=A''_{\tildew_{|\tildew|}} v_f' = \myvector{ d^{|\tildew|} \left( 2f_P(w) - 1 \right) \\ 1 - d^{|\tildew|} \left( 2f_P(w) - 1 \right) \\ 0 \\ \vdots \\ 0 } \in \mathbb{Z}^{n+1}.	
\]
Here the entries of $ A''_{\tildew_{|\tildew|}} $ are as follows:
\[
	A''_{\tildew_{|\tildew|}} =
	\mymatrix{rrrrr}{~1 & -1 & ~0 & \cdots & ~0 \\ 0 & 2 & 1 & \cdots & 1 \\ \hline  \multicolumn{5}{c}{ \mathbf{0}} },
\]
where $ \mathbf{0} $ is a $ (n-1,n+1) $-dimensional zero matrix.
The vector $v_f''$ is our desired final state for machine $ MP $. Thus, for each $ \sigma \in \Sigma \cup \{ \cent \} $, we set $ B_\sigma = A'_\sigma $, and, for the last operator we set $ B_\dollar = A''_\dollar A'_\dollar $. The initial vector of $ MP $ is $ u_0 = \myvector{v_0 \\ 0 }$. Then, $ \forall w \in \Sigma^* $, $ f_P(w) > \frac{1}{2} \leftrightarrow f_{MP}(w) > \frac{1}{2} $.
\qed
\end{proof}
The simulation in Theorem \ref{the:pfa-sim} is helpful for recognizing rational exclusive stochastic languages with bounded-error.

\begin{theorem}
	\label{the:sl-bafl}	
	$ \excstocrational \subseteq \baffine_{\mathbb{Z}}^{0} $. 
\end{theorem}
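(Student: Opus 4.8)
The plan is to start from the definition of an exclusive stochastic language and to reuse the integer encoding developed in the proof of Theorem~\ref{the:pfa-sim}, adding one crucial twist in the way the final vector is read off. So let $L \in \excstocrational$, witnessed by a rational PFA $P$ and a rational cutpoint $\lambda = p/q$ (in lowest terms, $0<\lambda<1$) with $L = \{w : f_P(w) \neq \lambda\}$. As in Theorem~\ref{the:pfa-sim} I would first normalize $P$ with the help of the end-markers so that its final configuration is $(f_P(w),\,1-f_P(w),\,0,\dots,0)^\top$, and introduce the integer scaling factor $d$ (the least common denominator of the entries of all $A_\sigma$). Running the integer operators $dA_\sigma$, together with the extra bookkeeping coordinate of that proof, produces the vector $v'_f$ whose first two coordinates are the integers $d^{|\tildew|}f_P(w)$ and $d^{|\tildew|}(1-f_P(w))$ and whose last coordinate is $1-d^{|\tildew|}$.

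The key quantity is the integer $a := q\,d^{|\tildew|}f_P(w) - p\,d^{|\tildew|}$, which equals $q\,d^{|\tildew|}\big(f_P(w)-\lambda\big)$ and therefore vanishes exactly when $w\notin L$ and is a nonzero integer (so $|a|\ge 1$) exactly when $w\in L$. Since both $d^{|\tildew|}f_P(w)$ and $d^{|\tildew|}$ are available as integer linear combinations of the coordinates of $v'_f$, the number $a$ is an integer affine function of that configuration, and I would compute it while scanning $\dollar$ by a single additional integer operator, in place of the clean-up operator $A''_{\tildew_{|\tildew|}}$ of Theorem~\ref{the:pfa-sim}.

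The twist that makes bounded error work despite $|a|$ being unbounded is to use \emph{two} accept states. I would take the last operator to send $v'_f$ to $(a,\,-a,\,1,\,0,\dots,0)^\top$, with the first two coordinates designated as accept states and the third carrying the compensating mass $1$; this map is sum-preserving (the target coordinates sum to $a-a+1=1$) and has integer entries, so it is a legitimate integer affine operator. Reading off the acceptance probability by~\eqref{eq:acc-val} gives $f_{MP}(w)=\tfrac{|a|+|-a|}{|a|+|-a|+1}=\tfrac{2|a|}{2|a|+1}$. Hence $f_{MP}(w)=0$ precisely for the non-members ($a=0$), which is exactly positive one-sidedness, while every member ($|a|\ge 1$) is accepted with probability at least $\tfrac{2}{3}$. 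This separates the two cases with error bound $\tfrac13<\tfrac12$ by an integer AfA, giving $L\in\baffine^0_{\mathbb{Z}}$.

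The routine parts are the normalization of $P$ and the verification that each operator is integer and has column sums equal to $1$; the one genuinely load-bearing idea, and the step I would be most careful to get right, is placing $+a$ and $-a$ in two separate accept states. Because the weighting operator~\eqref{eq:acc-val} normalizes by the $\ell_1$-norm and sums \emph{absolute} values, these two coordinates reinforce each other in the accept mass while cancelling in the sum-to-one constraint; this is what lets an arbitrarily large integer signal $|a|$ be turned into an acceptance probability bounded away from $\tfrac12$, which a single accept coordinate (forced to share the unit sum with its complement) cannot achieve.
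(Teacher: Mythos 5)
Your proof is correct, and although it rests on the same foundation as the paper's --- the integer simulation of Theorem~\ref{the:pfa-sim}, which turns $f_P(w)-\lambda$ into an integer-valued signal that vanishes exactly on non-members --- your finishing step is genuinely different. The paper first normalizes the cutpoint to $\tfrac{1}{2}$ (invoking the standard cutpoint-shifting fact for exclusive languages), keeps a \emph{single} accept state, observes that the final vectors $(k,1-k,0,\dots,0)^{\top}$ with $k\in\mathbb{Z}\setminus\{0\}$ give members accept value $|k|/(|k|+|1-k|)\geq \tfrac{1}{3}$, and then must run several copies of the machine in parallel to push members above $\tfrac{1}{2}$; one copy does not suffice, since for negative $k$ that value approaches $\tfrac{1}{2}$ from below as $|k|$ grows --- precisely the single-accept-state obstruction you identify at the end of your argument. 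Your construction instead handles an arbitrary rational cutpoint $p/q$ directly (so you never need the cutpoint-shifting step) and routes the signal into \emph{two} accept coordinates $(a,-a)$ with a compensating third coordinate equal to $1$, so the absolute-value weighting of Eq.~(\ref{eq:acc-val}) yields $2|a|/(2|a|+1)\geq \tfrac{2}{3}$ for members and exactly $0$ for non-members, giving positive one-sided error bound $\tfrac{1}{3}$ in a single machine with no parallel boosting. What the paper's route buys is error arbitrarily close to $0$ via copies; what yours buys is a leaner, self-contained machine (still about $n+1$ states, no tensoring) and the structural insight that antipodal accept coordinates amplify under the $\ell_1$ weighting while cancelling in the barycenter constraint. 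The one detail worth writing out explicitly is the final operator: since $d^{|\tildew|}=v'_f[1]+v'_f[2]$, one has $a=(q-p)\,v'_f[1]-p\,v'_f[2]$, so the matrix with first row $(q-p,-p,0,\dots,0)$, second row $(p-q,p,0,\dots,0)$, third row all ones, and zero rows elsewhere realizes your map, and each of its columns sums to $1$ --- confirming your claim that this verification is routine.
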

\begin{proof}
	Let $ L \in \excstocrational $ and let $ P $ be a PFA with rational transitions such that if $ w \in L $ then $ f_P(w) \neq \frac{1}{2} $, and if $ w \notin L $ then $ f_P(w) = \frac{1}{2} $.

Based on the simulation of Theorem \ref{the:pfa-sim}, we can construct an AfA $ MP $ that simulates $P$. For non-members of $L$, the first entry of the final vector is always zero and so they are accepted with value 0 by $ MP $; for members of $L$, the first entry of the final vector can be a non-zero integer. Then, the final vector of $ MP $ can be one of the followings
	\[
		\cdots,\	\mymatrix{r}{2 \\-1 \\ 0 \\ \vdots \\ 0}, \myvector{1\\0\\ 0 \\ \vdots \\ 0}, \mymatrix{r}{-1\\2\\ 0 \\ \vdots \\ 0}, \mymatrix{r}{-2\\3\\ 0 \\ \vdots \\ 0},\cdots .
	\]
	Hence, the accepting value is at least $ \frac{1}{3} $ for members of $L$. By executing a few copies of $ MP $ in parallel, we can increase the accept value arbitrary close to 1. Considering that all non-members of $L$ are accepted with zero value, the recognition mode is positive one-sided bounded-error.
	\qed
\end{proof}

The following corollary is obtained immediately from Theorem \ref{the:sl-bafl}.

\begin{corollary}
	$  \coexcstocrational \subseteq \baffine_{\mathbb{Z}}^{1} $.
\end{corollary}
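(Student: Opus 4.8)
The plan is to obtain this corollary by a straightforward complementation argument, leveraging Theorem \ref{the:sl-bafl} together with the two duality facts already recorded in the preliminaries: namely that $L \in \excstoc$ if and only if $\overline{L} \in \coexcstoc$, and that $L \in \posonebaffine$ implies $\overline{L} \in \negonebaffine$.

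First I would take an arbitrary language $L \in \coexcstocrational$. By definition of the complement class, its complement $\overline{L}$ lies in $\excstocrational$. Applying Theorem \ref{the:sl-bafl} then yields $\overline{L} \in \baffine_{\mathbb{Z}}^{0}$; that is, there exists an integer AfA $M$ that recognizes $\overline{L}$ with positive one-sided bounded error, accepting each non-member with value exactly $0$ and each member with value bounded away from $0$ (indeed arbitrarily close to $1$ after the parallel-copy amplification used in that proof).

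Next I would complement $M$ by swapping its accept and reject inner states, producing a machine $M'$ whose accept value on every input $w$ satisfies $f_{M'}(w) = 1 - f_M(w)$. This transformation leaves all the transition matrices untouched, so $M'$ remains an integer AfA. For $w \notin L$ (equivalently $w \in \overline{L}$) we had $f_M(w) = 0$, hence $f_{M'}(w) = 1$; for $w \in L$ we had $f_M(w)$ close to $1$, hence $f_{M'}(w)$ close to $0$. Thus $M'$ recognizes $L$ with negative one-sided bounded error using only integer transitions, which gives $L \in \baffine_{\mathbb{Z}}^{1}$ and establishes the claimed inclusion.

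I do not anticipate any real obstacle here: the only point requiring a moment's care is checking that the accept/reject swap genuinely preserves the integer restriction on the transitions. Since that swap modifies only the set $E_a$ of accept states and not the operators $B_\sigma$, the integrality is manifestly retained, so the one-sided duality noted in the preliminaries transfers verbatim to the integer-restricted subclasses.
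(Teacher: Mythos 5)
Your route is exactly the one the paper intends: the paper obtains this corollary ``immediately'' from Theorem \ref{the:sl-bafl} via the duality $L \in \posonebaffine \Leftrightarrow \overline{L} \in \negonebaffine$ recorded in Section \ref{sec:affine}, and your accept-state swap is the correct concrete realization of that duality. Your key checks are also sound: since the normalized final vector satisfies $\sum_k |v_f[k]|/|v_f| = 1$, replacing $E_a$ by $E \setminus E_a$ yields $f_{M'}(w) = 1 - f_M(w)$, and since the swap touches no transition matrix, the integer restriction is preserved.

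There is, however, a genuine slip in your third paragraph: the two cases are interchanged. Because $M$ recognizes $\overline{L}$ with \emph{positive} one-sided error, it is the strings $w \in L$ (the non-members of $\overline{L}$) that satisfy $f_M(w) = 0$, while the strings $w \notin L$ (the members of $\overline{L}$) have $f_M(w)$ close to $1$ --- the opposite of what you wrote, and in contradiction with your own (correct) description of $M$ in the preceding paragraph. Taken literally, your stated values would make $M'$ accept non-members of $L$ with value $1$ and members with value near $0$, i.e., $M'$ would recognize $\overline{L}$, not $L$, with negative one-sided error, so your final sentence would not follow. With the labels corrected, everything works as intended: $f_{M'}(w) = 1 - 0 = 1$ exactly for every $w \in L$ (which is precisely what membership in $\negonebaffine$ requires) and $f_{M'}(w) \leq \epsilon$ for $w \notin L$, giving $L \in \baffine_{\mathbb{Z}}^{1}$. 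This is a bookkeeping error rather than a missing idea, but as written the proof's displayed case analysis is inconsistent and needs that one-sentence repair.
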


It was shown in \cite{DCY16A} that $ \excstoc = \naffine = \nqal $, and therefore, our new result is stronger (bounded-error) but for a restricted case (using only rational numbers). One may immediately ask whether $ \baffine_{\mathbb{Q}}^{0} \subseteq \excstocrational $. This follows from the simulation of a NAfA by a NQFA given in \cite{DCY16A}, and the a simulation of a NQFA by PFA with exclusive cutpoint (see \cite{YS10A}). Note that all the intermediate machines can use only rational transitions. Moreover, we can give a direct simulation of a NAfA by a PFA by using Turakainen's techniques \cite{Tur75}. 

\begin{corollary}
	$ \baffine_{\mathbb{Z}}^{0} = \baffine_{\mathbb{Q}}^{ 0} = \excstocrational $ and $ \baffine_{\mathbb{Z}}^{1} = \baffine_{\mathbb{Q}}^{1} = \coexcstocrational $.
\end{corollary}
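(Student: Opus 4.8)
The plan is to establish the positive one-sided chain of equalities first and then obtain the negative one-sided chain by complementation. For the positive one-sided case I would close the cycle of containments
\[
  \excstocrational \subseteq \baffine_{\mathbb{Z}}^{0} \subseteq \baffine_{\mathbb{Q}}^{0} \subseteq \excstocrational .
\]
The first inclusion is exactly Theorem \ref{the:sl-bafl}, and the second is immediate because every integer affine transition is in particular a rational one, so a $\baffine_{\mathbb{Z}}^{0}$ recognizer is already a $\baffine_{\mathbb{Q}}^{0}$ recognizer without any modification. Thus all the work concentrates on the last inclusion.

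For $\baffine_{\mathbb{Q}}^{0} \subseteq \excstocrational$, the first observation I would make is that positive one-sided bounded-error recognition coincides, as a set-theoretic condition, with cutpoint-$0$ recognition: if $M$ is a rational AfA witnessing $L \in \baffine_{\mathbb{Q}}^{0}$, then $f_M(w)=0$ for every $w \notin L$ and $f_M(w) \geq 1-\epsilon > 0$ for every $w \in L$, so $L = \{ w \mid f_M(w) > 0 \}$. Hence the same machine recognizes $L$ with cutpoint $0$, giving $\baffine_{\mathbb{Q}}^{0} \subseteq \naffine_{\mathbb{Q}}$. I would then invoke the identity $\naffine = \nqal = \excstoc$ of \cite{DCY16A} in its rational form: chaining the simulation of a NAfA by a NQFA from \cite{DCY16A} with the simulation of a NQFA by a PFA with exclusive cutpoint from \cite{YS10A} yields $\naffine_{\mathbb{Q}} \subseteq \excstocrational$, which closes the cycle.

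The hard part will be guaranteeing that these two simulations preserve rationality, since the equalities of \cite{DCY16A} are stated over $\mathbb{R}$. Concretely, I would reexamine each construction and verify that, when the input affine transitions lie in $\mathbb{Q}$, the amplitudes of the intermediate NQFA and the probabilities of the resulting PFA remain rational, and that the exclusive cutpoint produced by \cite{YS10A} is itself rational. As a cleaner alternative that bypasses the quantum detour entirely, I would give a direct simulation of a rational NAfA by a PFA using Turakainen's embedding technique \cite{Tur75}, which manufactures a stochastic matrix out of an arbitrary rational-weighted automaton by adding constants and renormalizing, visibly keeping every entry and the induced cutpoint in $\mathbb{Q}$; the condition $f_M(w) \neq 0$ then transfers to $f_P(w) \neq \lambda$ for a rational $\lambda$.

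Finally, for the negative one-sided equalities $\baffine_{\mathbb{Z}}^{1} = \baffine_{\mathbb{Q}}^{1} = \coexcstocrational$ I would argue purely by complementation. The excerpt already records both $L \in \posonebaffine \Leftrightarrow \overline{L} \in \negonebaffine$ and $L \in \excstoc \Leftrightarrow \overline{L} \in \coexcstoc$; since complementing a language does not alter which number field its transitions live in, applying the already-established positive one-sided equalities to $\overline{L}$ transports them verbatim to the negative one-sided classes, completing the proof.
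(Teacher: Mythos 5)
Your proposal is correct and follows essentially the same route as the paper: the cycle $\excstocrational \subseteq \baffine_{\mathbb{Z}}^{0} \subseteq \baffine_{\mathbb{Q}}^{0} \subseteq \excstocrational$ via Theorem~\ref{the:sl-bafl}, the trivial $\mathbb{Z}\subseteq\mathbb{Q}$ inclusion, the observation that a positive one-sided machine is a rational NAfA, and then the rationality-preserving NAfA-to-NQFA-to-PFA simulations of \cite{DCY16A,YS10A} (or Turakainen's direct embedding \cite{Tur75}), with the negative one-sided chain obtained by complementation exactly as the paper does. Your explicit verification that one-sided bounded error yields cutpoint-$0$ recognition, and that swapping accept states complements $f_M$ without changing the transition field, merely spells out steps the paper leaves implicit.
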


The class $ \excstocrational $ is important because, as pointed in \cite{YS10B}, it contains many well-known nonregular languages like $ \mathtt{UPAL} = \{ a^nb^n \mid n > 0 \} $, $ \mathtt{PAL} = \{ w \in \Sigma^* \mid w = w^r \} $, $ \mathtt{SQUARE} = \{ a^n b^{n^2} \mid n >0 \} $, $ \mathtt{POWER} = \{ a^n b^{2^n} \mid n >0 \} $, etc. Interestingly, any language in $ \excstocrational $ ($ \coexcstocrational $) can also be recognized by two-way QFAs with positive (one-sided) bounded-error. Therefore, it is reasonable to compare AfAs with two-way QFAs.

We can provide logarithmic-space bounds for one-sided bounded-error affine languages. We know that $ \excstocrational \cup \coexcstocrational $ is in the deterministic logarithmic space class $L$ \cite{Mac98} and $ \mathtt{PAL} $ cannot be recognized by a probabilistic Turing machine in sublogarithmic space \cite{FK94}. Hence, we can immediately obtain the following result.

\begin{corollary}
	 $ \baffine_{\mathbb{Q}}^{0} \cup \baffine_{\mathbb{Q}}^{1} \subseteq \mathsf{L} $ and
	 $ \baffine_{\mathbb{Q}}^{0} \cup \baffine_{\mathbb{Q}}^{1} \nsubseteq\mathsf{BSpace(o(\log n))} $.
\end{corollary}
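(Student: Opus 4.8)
The plan is to read off both halves of the statement directly from the two facts assembled immediately before it, so that the argument reduces to a substitution together with the exhibition of a single witness language.

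For the inclusion, I would first invoke the preceding corollary, which establishes the identities $ \baffine_{\mathbb{Q}}^{0} = \excstocrational $ and $ \baffine_{\mathbb{Q}}^{1} = \coexcstocrational $. Taking the union of these two identities yields $ \baffine_{\mathbb{Q}}^{0} \cup \baffine_{\mathbb{Q}}^{1} = \excstocrational \cup \coexcstocrational $. The containment $ \excstocrational \cup \coexcstocrational \subseteq \mathsf{L} $ of Macarie \cite{Mac98} then finishes the first claim at once.

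For the separation, I would exhibit $ \mathtt{PAL} $ as the witness. As already noted above, $ \mathtt{PAL} \in \excstocrational $, and hence $ \mathtt{PAL} \in \baffine_{\mathbb{Q}}^{0} \subseteq \baffine_{\mathbb{Q}}^{0} \cup \baffine_{\mathbb{Q}}^{1} $. On the other hand, the lower bound of Freivalds and Karpinski \cite{FK94} asserts that $ \mathtt{PAL} $ cannot be recognized by any bounded-error probabilistic Turing machine running in sublogarithmic space, that is, $ \mathtt{PAL} \notin \mathsf{BSpace(o(\log n))} $. Since the union on the left therefore contains a language lying outside $ \mathsf{BSpace(o(\log n))} $, it cannot be a subset of that class, which is the second claim.

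The only point requiring care, and the closest thing to an obstacle, is matching conventions: I would make explicit that the space measure appearing in $ \mathsf{BSpace(o(\log n))} $ is exactly the two-sided bounded-error probabilistic space complexity for which \cite{FK94} proves its lower bound, and that this bound applies to the encoding of $ \mathtt{PAL} $ we use as witness. These are routine bookkeeping remarks rather than genuine difficulties, so no real obstruction is expected; both parts follow immediately once the two cited results are combined with the identities from the previous corollary.
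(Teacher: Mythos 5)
Your proof is correct and follows essentially the same route as the paper: the inclusion comes from the identities $\baffine_{\mathbb{Q}}^{0} = \excstocrational$ and $\baffine_{\mathbb{Q}}^{1} = \coexcstocrational$ combined with Macarie's result that $\excstocrational \cup \coexcstocrational \subseteq \mathsf{L}$, and the non-inclusion uses $\mathtt{PAL} \in \excstocrational$ together with the Freivalds--Karpinski sublogarithmic-space lower bound, exactly as the paper does. Your explicit naming of $\mathtt{PAL}$ as the witness and the remark about matching the bounded-error space measure are slightly more careful than the paper's one-line derivation, but substantively identical.
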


The language $ \mathtt{EQNEQ} = \{ aw_1 \cup bw_2 \in \{a,b\}^* \mid w_1 \in \mathtt{EQ} \mbox{ and } w_2 \in \mathtt{NEQ} \}  $ is not in $ \excstoc \cup \coexcstoc $, where $ \mathtt{EQ} = \{ w \in \{a,b\}^* \mid |w|_a = |w|_b \} $ and $ \mathtt{NEQ} $ is the complement of $ \mathtt{EQ} $ \cite{YS10A}. We know that $ \mathtt{EQ} $ can be recognized by an AfA with bounded-error, and hence, it is not hard to design an AfA recognizing $ \mathtt{EQNEQ} $ with bounded-error; the error, however, must be two-sided since it is not in $ \excstoc \cup \coexcstoc $.

\begin{theorem}
	$ \baffine_{\mathbb{Q}}^{0} \cup \baffine_{\mathbb{Q}}^{1} \subsetneq \baffine_\mathbb{Q} $.
\end{theorem}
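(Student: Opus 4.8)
The plan is to prove the inclusion first and then separate the two classes with the single language $\mathtt{EQNEQ}$ foreshadowed just above the statement. The inclusion $\baffine_{\mathbb{Q}}^{0} \cup \baffine_{\mathbb{Q}}^{1} \subseteq \baffine_\mathbb{Q}$ is immediate from the definitions, since positive and negative one-sided bounded error are both special cases of two-sided bounded error. Hence everything reduces to exhibiting a language that lies in $\baffine_\mathbb{Q}$ but in neither one-sided class, and $\mathtt{EQNEQ}$ is the natural candidate.

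First I would show $\mathtt{EQNEQ} \in \baffine_\mathbb{Q}$. We are given that $\mathtt{EQ} \in \baffine_\mathbb{Q}$; moreover $\baffine_\mathbb{Q}$ is closed under complement, because replacing the accept set $E_a$ by $E \setminus E_a$ turns the accept value $f_M(w)$ into $1 - f_M(w)$ (the $\ell_1$-normalized weights over all inner states sum to $1$), which preserves the error bound. Thus $\mathtt{NEQ} \in \baffine_\mathbb{Q}$ as well. To recognize $\mathtt{EQNEQ} = \{a\}\,\mathtt{EQ} \cup \{b\}\,\mathtt{NEQ}$, I would combine a bounded-error AfA $M_{\mathtt{EQ}}$ for $\mathtt{EQ}$ and one $M_{\mathtt{NEQ}}$ for $\mathtt{NEQ}$ on disjoint coordinate blocks, together with a fresh start state $s$ occupied after the left end-marker. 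The transition for the first symbol routes the whole configuration from $s$ into the initial configuration of $M_{\mathtt{EQ}}$ (on reading $a$) or of $M_{\mathtt{NEQ}}$ (on reading $b$), while acting block-diagonally as the respective sub-machine on each block. Since $s$ is occupied only before the first symbol is read, this routing column of $A_a$ and $A_b$ fires exactly once and never interferes with the suffix computation; because all mass then sits in a single block, the weighting operator computes precisely the accept value of the active sub-machine on the suffix. Taking $E_a$ to be the union of the two sub-machines' accept sets, and the worse of their two error bounds, gives a rational AfA recognizing $\mathtt{EQNEQ}$ with (two-sided) bounded error.

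Next I would show $\mathtt{EQNEQ} \notin \baffine_{\mathbb{Q}}^{0} \cup \baffine_{\mathbb{Q}}^{1}$. By the earlier corollary, $\baffine_{\mathbb{Q}}^{0} = \excstocrational$ and $\baffine_{\mathbb{Q}}^{1} = \coexcstocrational$, so this union is contained in $\excstoc \cup \coexcstoc$. Since $\mathtt{EQNEQ} \notin \excstoc \cup \coexcstoc$ (as recalled above), it cannot belong to either one-sided class, completing the strict separation.

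The main obstacle I expect is the construction in the second paragraph: one must verify that the single fixed matrices $A_a, A_b$ can simultaneously perform the one-shot routing out of $s$ and the repeated block-diagonal suffix dynamics of the two sub-machines while remaining genuine affine maps (every column summing to $1$), and that no mass leaks between the two blocks, so that the $\ell_1$-normalization in the weighting operator does not degrade the inherited error bounds. The separation half, by contrast, is essentially bookkeeping on top of the already-established characterizations.
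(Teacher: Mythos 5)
Your proposal is correct and takes essentially the same route as the paper: the paper also recognizes $\mathtt{EQNEQ}$ with two-sided bounded error by combining a bounded-error AfA for $\mathtt{EQ}$ with one for its complement, and rules out one-sided error because $\baffine_{\mathbb{Q}}^{0} \cup \baffine_{\mathbb{Q}}^{1} \subseteq \excstoc \cup \coexcstoc$ while $\mathtt{EQNEQ} \notin \excstoc \cup \coexcstoc$. Your explicit block-diagonal routing construction simply fills in the details the paper dismisses as ``not hard to design,'' and it is sound (including the implicit handling of the empty string, which stays in the non-accepting start state).
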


\section{Exact Simulation of QFAs}\label{sec:mcqfa-sim}

In this section, we present an exact simulation of QFAs by AfAs. We start with the exact simulation of MCQFAs due to its simplicity.

\begin{lemma}
	\label{lem:mcqfa-sim}
	For a given MCQFA $M$ with $n$ inner states defined over $ \mathbb{R}^n $, there exists an AfA $MM$ with $(n^2+1)$ inner states that exactly simulates $M$.
\end{lemma}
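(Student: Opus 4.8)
The plan is to linearize the quadratic acceptance condition of $M$ by tracking the tensor product of the quantum state with itself, and then to turn the resulting linear evolution into an affine one by appending a single coordinate.

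First I would exploit that $M$ is defined over $\mathbb{R}^n$, so every $U_\sigma$ is real orthogonal and the final state $\ket{v_f}$ is a real unit vector; hence the acceptance probability collapses to $f_M(w) = \sum_{q_j \in Q_a} \braket{q_j}{v_f}^2 = \sum_{q_j \in Q_a} v_f[j]^2$, a sum of squared amplitudes. The key observation is that the map $\ket{v} \mapsto \ket{v}\otimes\ket{v} \in \mathbb{R}^{n^2}$ makes this quantity linear: the coordinate indexed by $(i,j)$ equals $v[i]v[j]$, and one step of $M$ acts on it linearly as $\ket{v}\otimes\ket{v} \mapsto (U_\sigma\otimes U_\sigma)(\ket{v}\otimes\ket{v})$. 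Then $f_M(w)$ is exactly the sum of the \emph{diagonal} coordinates $(j,j)$ with $q_j \in Q_a$, and since $\ket{v_f}$ is a unit vector the diagonal coordinates always sum to $\sum_i v_f[i]^2 = 1$, an identity I will use at the end.

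Next I would make the evolution affine. The operator $U_\sigma\otimes U_\sigma$ is linear but not affine, because its column $(k,l)$ sums to $\big(\sum_i (U_\sigma)_{ik}\big)\big(\sum_j (U_\sigma)_{jl}\big)$ rather than to $1$. Using the device of Theorem~\ref{the:pfa-sim}, I append one extra coordinate and set
\[
  B_\sigma = \mymatrix{c|c}{U_\sigma\otimes U_\sigma & 0 \\ \hline \overline{1} & 1},
\]
where $\overline{1}$ carries exactly the entries needed to make each column sum to $1$; this yields the $(n^2+1)$ inner states of $MM$. Starting from the valid configuration $\myvector{\ket{q_s}\otimes\ket{q_s} \\ 0}$, the top-right zero block guarantees that the first $n^2$ coordinates track $\ket{v_j}\otimes\ket{v_j}$ exactly, while the affine invariant forces all entries to sum to $1$, so the last coordinate automatically carries $1 - S^2$ with $S = \sum_i v_f[i]$.

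The remaining step, which I expect to be the main obstacle, is to read off $f_M(w)$ through the $\ell_1$-normalized weighting of an AfA: the off-diagonal coordinates $v_f[i]v_f[j]$ with $i\neq j$ can be negative and would contaminate the $\ell_1$-norm. I resolve this by folding one final affine operator $F$ into $B_\dollar$ that collapses the whole vector into two coordinates: row $1$ sums the accepting diagonals $(j,j)$ with $q_j\in Q_a$, and row $2$ sums everything else (the rejecting diagonals, all off-diagonals, and the appended coordinate), with the single accept state of $MM$ taken to be coordinate $1$. The first entry becomes $\sum_{q_j\in Q_a} v_f[j]^2 = f_M(w)=:a$, while the second telescopes to $(1-a) + (S^2-1) + (1-S^2) = 1-a$ by the unit-norm identity; every column of $F$ contributes a single $1$ to exactly one of the two rows, so $F$ is affine. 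The final configuration is thus $\myvector{a \\ 1-a \\ 0 \\ \vdots \\ 0}$ with $\ell_1$-norm $1$, giving $f_{MM}(w)=a=f_M(w)$ and hence an exact simulation. What then remains is only to verify these routine identities and that $F$ indeed has all column sums equal to $1$.
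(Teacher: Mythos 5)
Your proposal is correct and follows essentially the same route as the paper's proof: track $\ket{v}\otimes\ket{v}$ via $U_\sigma\otimes U_\sigma$, append one coordinate with a bottom row $\overline{1}$ to restore column sums of $1$, and fold a final affine collapse into the end-marker operator that routes the accepting diagonal entries $(q_j,q_j)$, $q_j\in Q_a$, to the first coordinate and everything else to the second, yielding the final configuration $(f_M(w),\,1-f_M(w),\,0,\dots,0)^T$. Your explicit telescoping check that the second entry equals $(1-a)+(S^2-1)+(1-S^2)=1-a$ is a detail the paper leaves implicit, but it is the same construction.
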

\begin{proof}
Let $ M = (Q,\Sigma,\{ U_\sigma \mid \sigma \in \TSigma \},q_s,Q_a) $ be an $ n $-state MCQFA and $ \ket{v_0} = \ket{q_s} $ be the initial quantum state. All transitions of $M $ use only real numbers. For any given input $ w \in \Sigma^* $, the final quantum state is $ \ket{v_f} $ is
\[
	\ket{v_f} = U_{\tildew_{|\tildew|}} U_{\tildew_{|\tildew|-1}} \cdots U_{\tildew_{1}} \ket{v_0}.
\]
In order to turn amplitudes into probabilities of observing the basis states from the final vector, we can tensor $\ket{v_f}$ with itself \cite{MC00}. Thus,
\[
	\ket{v_f} \otimes \ket{v_f} = (U_{\tildew_{|\tildew|}} \otimes U_{\tildew_{|\tildew|}}) (U_{\tildew_{|\tildew|-1}} \otimes U_{\tildew_{|\tildew|-1}}) \cdots (U_{\tildew_{1}} \otimes U_{\tildew_{1}} ) (\ket{v_0} \otimes \ket{v_0}).
\]

We construct an AfA $ MM $ that simulates the computation of $ M $. 
The set of inner states is $  Q \times Q \cup\{ q_{n^2+1} \}  $ and the initial state is $ (q_s,q_s) $. We assume with no loss of generality that there is only one accept state $ (q_1,q_1)$. For any symbol $ \sigma \in \Sigma \cup \{ \cent \} $, the transition affine matrix $ A_\sigma $ is defined as
\[
	A_\sigma = \mymatrix{c|c}{ U_\sigma \otimes U_\sigma & 0 \\ \hline \overline{1} & 1 },
\]
where $ \overline{1} $ is a row vector that makes the summation of each column under $U_\sigma \otimes U_\sigma$ equal to 1. The affine transformation $ A_\dollar $ is composed by two affine operators
\[
	A_\dollar = A'_\dollar \mymatrix{c|c}{ U_\sigma \otimes U_\sigma & 0 \\ \hline \overline{1} & 1 },
\]
where $ A'_\dollar $ is an affine operator to be specified later. Then, on input $ w $, the final affine state is 
\[
	u_f = A'_\dollar \myvector{ v_f \otimes v_f \\ \overline{1} },
\]
where $ \overline{1} $ is equal to 1 minus the summation of the rest of the entries in $u_f$. The accept value of $M$ on $w$ can now be calculated from the values of $ v_f \otimes v_f $, that is, the summation of entries corresponding to $ (q_j,q_j) $ for all $ q_j \in Q_a $. Similar to the simulation in the previous section, we define $ A'_\dollar $ as an operation that 
computes the  summation over all entries corresponding to each accepting state of the form $(q_j,q_j)$ and copies the result to the first entry of $u_f$; all remaining values are added and copied to the second entry of $u_f$. (The first and second rows of $ A'_\dollar $ are 0-1 vectors and all the other rows are by zero vectors.) Thus, our final state is $u_f=(f_M(w),1-f_M(w),0,\dots,0)^T$.
Finally, we have that $ f_{MM}(w) $ equals $ f_M(w) $ and the number of inner states of $MM$ is $ n^2 +1 $.\qed
\end{proof}

It is known that the computation of any $ n $-state QFA $M$ (defined with complex numbers) can be simulated by an $n^2$-state \emph{general finite-state automaton} $G$ such that $ f_M(w) = f_G(w) $ for any $ w \in \Sigma^* $ \cite{YS11A}. Then, by adding one more state, we can design an AfA $ MM $ such that $ f_M(w) = f_{MM}(w) $ for any $ w \in \Sigma^* $. Hence, the following result immediately follows.

\begin{theorem}
	\label{the:qfa-sim}
	For a given QFA $M$ with $n$ inner states, there exists an AfA $MM$ with $(n^2+1)$ inner states that exactly simulates $M$.
\end{theorem}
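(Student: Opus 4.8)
The plan is to reduce the statement to two ingredients that are already available: the linear-algebraic simulation of a QFA by a general (real) finite-state automaton $G$ from \cite{YS11A}, and the ``slack-state'' augmentation that turns such a linear automaton into an affine one, exactly as carried out in the proofs of Theorem~\ref{the:pfa-sim} and Lemma~\ref{lem:mcqfa-sim}. The whole argument is essentially Lemma~\ref{lem:mcqfa-sim} with the tensor square $U_\sigma\otimes U_\sigma$ replaced by the natural matrix representation of the general superoperator.

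First, I would recall the vectorization of the density-matrix formalism. A density matrix $\rho\in\mathbb{C}^{n\times n}$ is identified with an $n^2$-dimensional vector $\mathrm{vec}(\rho)$; since Hermitian matrices form a real vector space of dimension $n^2$, one may take a real Hermitian basis so that $\mathrm{vec}(\rho)$ is real. Each trace-preserving superoperator $\mathcal{E}_\sigma$ then acts as a real linear map represented by an $n^2\times n^2$ matrix $\widehat{E}_\sigma$ (its Liouville/natural representation), and the acceptance value $f_M(w)=\sum_{q_j\in Q_a}\rho_f[j,j]$, reading only diagonal entries, is a fixed real linear functional of $\mathrm{vec}(\rho_f)$. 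This is precisely the $n^2$-state general finite automaton $G$ with $f_G(w)=f_M(w)$ guaranteed by \cite{YS11A}.

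Second, I would lift $G$ to an AfA by adding one coordinate, mirroring Lemma~\ref{lem:mcqfa-sim}. For each $\sigma\in\Sigma\cup\{\cent\}$ set
\[
  A_\sigma=\mymatrix{c|c}{ \widehat{E}_\sigma & 0 \\ \hline \overline{1} & 1 },
\]
where $\overline{1}$ completes every column of the top block to sum $1$, so the extra coordinate holds $1$ minus the sum of the first $n^2$ entries and every configuration is a legitimate affine state. For the right end-marker I would compose with an affine operator $A'_\dollar$ that adds the entries indexed by the accepting diagonal pairs $(q_j,q_j)$ into the first coordinate and everything else into the second, yielding the final state $(f_M(w),\,1-f_M(w),\,0,\dots,0)^{T}$. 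The weighting operator of \eqref{eq:acc-val} then returns $f_{MM}(w)=f_M(w)$, and the total number of states is $n^2+1$.

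The step needing the most care is the reality and linearity of the acceptance functional: unlike the pure-state case of Lemma~\ref{lem:mcqfa-sim}, where $\ket{v_f}\otimes\ket{v_f}$ is patently real, here one works with complex density matrices and must verify that the natural representation can be realified and that $f_M$ survives as a genuine real linear functional. This is exactly the content of \cite{YS11A}, so the real difficulty is delegated to that result and the augmentation becomes routine. I would also remark that trace preservation, $\sum_j\rho_f[j,j]=1$, is what guarantees that the first two coordinates after $A'_\dollar$ already sum to $1$, so no separate normalization is required.
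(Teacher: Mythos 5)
Your proposal is correct and follows essentially the same route as the paper: the paper likewise invokes the result of \cite{YS11A} that any $n$-state QFA is exactly simulated by an $n^2$-state general finite-state automaton and then adds a single slack state, exactly as in Lemma~\ref{lem:mcqfa-sim}, to obtain the AfA with $n^2+1$ states. Your write-up merely makes explicit the vectorization and the collecting operator $A'_\dollar$ that the paper leaves implicit (one small nuance: the first two final coordinates sum to $1$ because the slack coordinate preserves the affine normalization under every $A_\sigma$, while trace preservation is what keeps $f_M(w)\in[0,1]$ so that the weighting operator of Eq.~(\ref{eq:acc-val}) returns $f_M(w)$ exactly).
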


By using this theorem, we inherit the superiority results of QFAs over PFAs \cite{AY15} as the superiority results of AfAs over PFAs. The only issue we should be careful about is the quadratic increase in the number of states, which could be significant depending on the context.

The simulation techniques given here can be applied to different cases. For example, an affine circuit can be defined similarly to a quantum circuit, using affine operators instead of unitary operators. Then, using the above simulation(s), it follows that any quantum circuit of width $ d(n) $ and length $ s(n)$ can be simulated exactly by an affine circuit of width $ d^2(n)+1 $ and length $ s(n) $, where $ n $ is the parameter of the input length. Therefore, we can say that the class $ \sf BQP $ is a subset of bounded-error affine polynomial-time defined with circuits. Moreover, $\sf  PSPACE $ is a trivial upper bound for these polynomial-time  circuits.

\section{Succinctness of Affine Computation.}\label{sec:succ}
\subsection{Bounded-error}\label{sec:succ-be}

For any prime number $p$, the language $ \mathtt{MOD_p} = \{ a^{jp} \mid j \geq 0 \} $, over the unary alphabet $\{a\}$, can be recognized by a bounded-error MCQFA with $ O(\log(p)) $ inner states; any bounded-error PFA, however, requires at least $ p $ states \cite{AF98}. The MCQFA algorithm for $ \mathtt{MOD_p} $ is indeed composed by $ O(\log(p)) $ copies of 2-state MCQFAs. Since we can simulate these 2-state MCQFAs exactly by 5-state AfAs, it follows that $\mathtt{MOD_p} $ can be recognized by bounded-error AfAs with $ O(\log(p)) $ inner states. 
	
	The language $ \mathtt{COUNT_n} = \{ a^n \} $ for any $ n>0 $, also known as the (unary) \emph{counting problem}, can be solved by bounded-error AfAs with a constant number of states; moreover, any DFA requires $ n $ states \cite{KTSV99}, which implies that any bounded-error QFAs must have at least $\Omega(\sqrt{\log(n)}) $ states \cite{AY15}.
	
	\begin{theorem}
		\label{the:count}
		The language $ \mathtt{COUNT_n} $ can be recognized by a 2-state AfA with negative one-sided bounded-error.
	\end{theorem}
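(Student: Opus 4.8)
The plan is to reduce everything to the one‑dimensional dynamics of a two‑state AfA. Writing each configuration as $(x,1-x)^T$, every affine transition matrix acts on the first coordinate as an affine map $x\mapsto \lambda x+\mu$, since the all‑ones row is a left eigenvector with eigenvalue $1$ and the other eigenvalue is the determinant. Consequently, after reading $\cent a^m\dollar$ the accept coordinate has the closed form $p(m)=\alpha\lambda^m+\gamma$ (or $\alpha m+\gamma$ in the degenerate case $\lambda=1$), and by \eqref{eq:acc-val} the accept value is $f_M(a^m)=\frac{|p(m)|}{|p(m)|+|1-p(m)|}=:h(p(m))$. The first step is to choose $A_\cent$, $A_a$, $A_\dollar$ so that $p(n)=1$ exactly; this already secures the negative one‑sided requirement, because $p(n)=1$ forces the non‑accept coordinate to vanish and hence $f_M(a^n)=1$.

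For the counting itself I would make reading $a$ a pure translation $x\mapsto x+1$ (for instance $A_a=\begin{pmatrix}2&1\\-1&0\end{pmatrix}$, which has unit column sums and sends $(x,1-x)\mapsto(x+1,-x)$), let $A_\cent$ install the offset $-n$ into the first coordinate, and let $A_\dollar$ be an affine readout. Then after $\cent a^m$ the first coordinate equals the integer $k:=m-n$, which is $0$ exactly on the accepted string, and $A_\dollar$ is tuned so that $p$ is an affine image of $k$ with $p(n)=1$. The remaining task is to bound $h(p(m))$ for every $m\neq n$.

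The elementary fact to exploit is the shape of $h$: on $[0,1]$ one has $h(p)=p$; for $p<0$ one has $h(p)=\frac{-p}{1-2p}<\tfrac12$; and for $p>1$ one has $h(p)>\tfrac12$. Hence the error‑$\epsilon$ region is the bounded interval $\frac{-\epsilon}{1-2\epsilon}\le p\le\epsilon$, and the construction must keep $p(m)$ inside this interval for all $m\neq n$ while spiking to $1$ at $m=n$. I would prove this by splitting the non‑members into the finitely many short strings $m<n$ and the infinite family $m>n$, checking that the chosen signs of $\alpha$ and $\lambda$ push $p(m)$ into the safe (negative, or near‑zero) branch of $h$ in each regime, and reading off a uniform $\epsilon<\tfrac12$.

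The main obstacle is precisely this last bound, and it is the delicate part of the argument. A single real transition matrix produces only one geometric rate $\lambda$, so $p(m)=\alpha\lambda^m+\gamma$ is monotone (or linearly unbounded); a sequence of this form must be controlled simultaneously on both sides of the spike $m=n$, and for arbitrarily long inputs it must be prevented from drifting toward the regime $|p|\to\infty$, where $h(p(m))$ creeps up to $\tfrac12$. Reconciling the short‑string regime $m<n$ with the unbounded regime $m\to\infty$ — without the luxury of running extra parallel copies, since we are pinned to two states — is where essentially all of the work lies, and it is exactly here that the joint choice of $\lambda$, the offset installed by $A_\cent$, and the accept state must be made to cooperate.
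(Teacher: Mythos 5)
Your reduction to one--dimensional affine dynamics is sound, and your construction genuinely differs from the paper's: the paper uses $A_\cent$ to install $2^n$ and then a \emph{contraction} $x\mapsto x/2$ on each $a$, so the accept coordinate runs through $2^{n-j}$, hits $1$ exactly at $j=n$, decays to $0$ on long inputs, and on the finitely many short inputs gives values $x/(2x-1)\le 2/3$ (e.g.\ the final vector $(2,-1)^T$). You instead use a pure translation $x\mapsto x+1$ with offset $-n$, so the accept coordinate is the integer $k=m-n$; the contraction buys decay to $0$ on long strings, while your translation leaves $p(m)$ unbounded with value tending to $\frac12$. Both mechanisms detect $m=n$ by an exact hit, and yours is, if anything, more elementary.

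The genuine gap is that you never carry out the final bound, and you present it as the place where the proof might fail --- but it is a short finite computation, and the obstacle you fear is an artifact of demanding a stronger guarantee than the paper's own proof delivers. Take $A_\dollar$ to be the identity and let $A_\cent$ send $e_1$ to $(1-n,\,n)^T$, so that $p(m)=m-n+1=k+1$. Then: for $k=0$ (the member $a^n$), $p=1$ and $f_M(a^n)=1$ exactly; for $k\ge 1$, $f_M=\frac{k+1}{2k+1}\le\frac23$, decreasing to $\frac12$; for $k=-1$, $f_M=0$; for $k\le -2$, $p=k+1<0$ and $f_M=\frac{-p}{1-2p}<\frac12$. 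Hence $\sup_{m\ne n}f_M(a^m)=\frac23$, attained at $m=n+1$ --- precisely the $1$ versus $\le\frac23$ separation the paper's 2-state machine achieves (its non-members also reach $\frac23$). Your insistence that $p(m)$ remain in the bounded interval $\bigl[\frac{-\epsilon}{1-2\epsilon},\epsilon\bigr]$ for some $\epsilon<\frac12$ is not satisfied by the paper's construction either; at the 2-state stage one only needs members pinned at exactly $1$ (the negative one-sided condition) and the non-member values bounded away from $1$, after which, as the paper notes immediately after the theorem, tensoring a constant number of copies (accept values multiply, members stay at $1$) drives the error below any fixed bound. So the drift of $h(p(m))$ toward $\frac12$ is harmless, the ``delicate cooperation'' of $\lambda$, the offset, and the readout is unnecessary, and your construction completes to a correct proof once this computation is written down.
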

	 \begin{proof}
	 	We use two states $ \{e_1,e_2\} $ where $ e_1 $ is the initial and only accept inner state. Over an alphabet $ \Sigma = \{a\} $, we define the initial configuration state $v_0$ and $A_{\cent}$ as \[
	v_0 = \myvector{1 \\ 0} \text{ and }
	A_{\cent} = \mymatrix{cc}{ 2^n & ~~0 \\ 1-2^n & ~~1 },
\] respectively.  After scanning the left end-marker, the configuration state is
	\[
		v_1 = \myvector{ 2^n \\ 1-2^n }.
	\]
	For each symbol $a$, we apply the operation
\[
 A_a=\mymatrix{cc}{ \frac{1}{2} & 0 \\ \frac{1}{2} & 1 }.
\]
Then, after reading $ j $ symbols, we have that
	\[
		v_{j+1} = \myvector{ 2^{n-j} \\ 1 - 2^{n-j} }.
	\]  
	Finally, we define $A_\$$ as the identity operation. For the single member of $\mathtt{MOD_p}$, namely the string $ a^n $, the final configuration state is
	\[
		v_f = \myvector{ 1 \\ 0 },
	\]
	and hence, it is accepted exactly. For any non-member string of $\mathtt{MOD_p}$, the final configuration state can be one of the followings
	\setstretch{1.3}
	\[
		\cdots,\mymatrix{r}{ 4 \\ -3 }, \mymatrix{r}{ 2 \\ -1 }, \myvector{ \frac{1}{2} \\ \frac{1}{2} },\myvector{ \frac{1}{4} \\ \frac{3}{4} }, \cdots
	\]
	\setstretch{1}
	and, in consequence, the accept value can be at most $ \frac{2}{3} $.
	 \qed
	 \end{proof}
	Using a few copies of the AfA of Theorem \ref{the:count}, the error can be made arbitrarily close to 0 with a number of inner states that depends only on the error bound.

\subsection{Zero-error}
For any $k > 0 $, $ \modtwo  = (\modtwoyes,\modtwono) $ is a promise problem,\footnote{A promise problem $\tt L=(L_{yes},L_{no})$ is solved by a machine $M$, or $M$ \emph{solves} $\tt L$, if for all $ w \in \tt L_{yes}$, $M$ accepts $w$, and for all $ w \in \tt L_{no}$, $M$ rejects $w$.} where
$ \modtwoyes = \{ a^{j2^k} \mid j \equiv 0 \mod 2 \} $ and
$ \modtwono = \{ a^{j2^k} \mid j \equiv 1 \mod 2 \} $.

	It is known that $\modtwo$ can be solved exactly by a 2-state MCQFA \cite{AY12}. Any bounded-error PFA, however, requires at least $ 2^{k+1} $ states \cite{RY14A}. Due to Lemma \ref{lem:mcqfa-sim}, we can obtain the following result.
	\begin{theorem}
		The promise problem $ \modtwo $ can be solved by a 5-state AfA exactly.
	\end{theorem}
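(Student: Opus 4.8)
The plan is to invoke Lemma~\ref{lem:mcqfa-sim} directly. We are told in the paragraph preceding the statement that $\modtwo$ can be solved exactly by a $2$-state MCQFA (the reference \cite{AY12}). The lemma we have already proved says that any $n$-state MCQFA defined over $\mathbb{R}^n$ can be simulated exactly by an AfA with $(n^2+1)$ inner states. First I would take the known $2$-state MCQFA $M$ for $\modtwo$ and apply the lemma with $n=2$, which yields an AfA $MM$ with $2^2+1 = 5$ inner states satisfying $f_{MM}(w) = f_M(w)$ for every input $w$.

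The only point that needs a moment's care is that Lemma~\ref{lem:mcqfa-sim} is stated for MCQFAs defined over $\mathbb{R}^n$ (real amplitudes), whereas the generic QFA simulation in Theorem~\ref{the:qfa-sim} allows complex amplitudes. So I would first check that the $2$-state MCQFA for $\modtwo$ of \cite{AY12} indeed uses only real amplitudes; this is the standard rotation-based construction, where the single-symbol unitary is a real rotation of the plane by an angle chosen so that $2^k$ applications realize the parity test, so all transition matrices are real orthogonal and the lemma applies verbatim. (Alternatively, one may simply appeal to Theorem~\ref{the:qfa-sim}, which gives the same $n^2+1$ bound for arbitrary QFAs and removes the reality concern entirely.)

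Since exact solvability of the promise problem is preserved under an exact simulation --- if $f_M(w)=1$ for every $w\in\modtwoyes$ and $f_M(w)=0$ for every $w\in\modtwono$, then the identity $f_{MM}=f_M$ transfers both conditions immediately to $MM$ --- the resulting $5$-state AfA solves $\modtwo$ exactly. There is no real obstacle here: the whole argument is a one-line application of the already-established lemma, and the bulk of the substance lives in the quadratic-blowup simulation that Lemma~\ref{lem:mcqfa-sim} has already carried out. The only thing I would be careful to state explicitly is the amplitude-reality check (or the fallback to Theorem~\ref{the:qfa-sim}), so that the hypothesis of the lemma is genuinely met before the $n^2+1=5$ count is asserted.
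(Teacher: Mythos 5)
Your proposal is correct and matches the paper's own argument: the paper derives this theorem exactly as you do, by applying Lemma~\ref{lem:mcqfa-sim} with $n=2$ to the known $2$-state MCQFA for $\modtwo$ from \cite{AY12}, yielding $2^2+1=5$ states. Your explicit check that the MCQFA of \cite{AY12} uses only real (rotation) amplitudes is a sensible extra care the paper leaves implicit, since Lemma~\ref{lem:mcqfa-sim} is stated for MCQFAs over $\mathbb{R}^n$.
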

	In consequence, zero-error AfAs are also interesting like MCQFAs. Now
	consider the promise problem $ \modfour=(\modfouryes,\modfourno) $ where  $ \modfouryes = \{ a^{j.2^k} \mid j \equiv 0 \mod 4   \} $ and  $ \modfourno = \{ a^{j.2^k} \mid j \equiv 1 \mod 4   \} $.
	\begin{theorem}
	\label{the:mod-afa}
		The promise problem $ \modfour $ can be solved exactly by a 3-state AfA.
	\end{theorem}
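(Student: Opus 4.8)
The plan is to give a direct 3-state construction that emulates the quarter-turn behaviour of the 2-state MCQFA for $\modfour$ without paying the quadratic cost of the generic simulation in Lemma~\ref{lem:mcqfa-sim}. The idea is to keep a single two-dimensional ``amplitude'' $(x,y)$ in two of the three states and use the third state purely as a slack coordinate that absorbs whatever is needed to keep each column sum (equivalently, the sum of the configuration vector) equal to $1$. Because the affine acceptance rule in Equation~\ref{eq:acc-val} reads off absolute values directly rather than squared amplitudes, a genuine real rotation can be carried in the evolution itself, so no tensoring (and hence no squaring of the dimension) is required; this is precisely what lets us drop from $5$ states to $3$.

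Concretely, I would set $\theta=\pi/2^{k+1}$ and let $A_a$ act as the planar rotation by $\theta$ on the first two coordinates, with the third row chosen so that every column sums to $1$:
\[
 A_a=\mymatrix{ccc}{ \cos\theta & -\sin\theta & 0 \\ \sin\theta & \cos\theta & 0 \\ 1-\cos\theta-\sin\theta & ~1+\sin\theta-\cos\theta & 1 }.
\]
A one-line check shows the three column sums are all $1$, and that on any vector summing to $1$ the map $A_a$ rotates $(x,y)$ by $\theta$ while the third entry is forced to equal $1-x-y$. Taking $A_\cent=A_\dollar=I$, the initial state $e_1=(1,0,0)^T$, and the single accept state $e_1$, after reading $a^{j\cdot 2^k}$ the first two coordinates are $(1,0)$ rotated by $j\cdot 2^k\cdot\theta=j\pi/2$, since $2^k\theta=\pi/2$.

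The readout then follows immediately from the promise. For a yes-instance $j\equiv 0\pmod 4$ the total rotation is a multiple of $2\pi$, so the final vector is $(1,0,0)^T$ and the accept value is $|1|/1=1$; for a no-instance $j\equiv 1\pmod 4$ the rotation is $\pi/2$, so the final vector is $(0,1,0)^T$ and the accept value is $|0|/1=0$. Hence any cutpoint in $(0,1)$ separates the two promise sets exactly, and the machine uses only $3$ states.

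The step I expect to be the only real subtlety is arranging the geometry so that the $\ell_1$-normalised, absolute-value readout stays clean: one must verify that for \emph{both} relevant residues the slack coordinate vanishes, so that $\|v_f\|_1=1$ and no stray contribution enters the denominator of Equation~\ref{eq:acc-val}. This is exactly why placing the starting amplitude at $(1,0)$ and choosing $E_a=\{e_1\}$ is the right setup; everything else reduces to the routine fact that $2^k$ applications of $A_a$ compose to a single quarter turn.
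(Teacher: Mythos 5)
Your proof is correct and takes essentially the same route as the paper's: both observe that the real-rotation 2-state MCQFA of \cite{AY12} (angle $\pi/2^{k+1}$, so each block $a^{2^k}$ is a quarter turn) can be embedded directly into a 3-state AfA whose third inner state is a slack coordinate restoring the column sums to $1$, avoiding the tensoring that the generic simulation of Lemma~\ref{lem:mcqfa-sim} would require. The only difference is presentational: you write out the affine matrix and the readout explicitly, while the paper leaves the 3-state embedding implicit.
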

	\begin{proof}
		We use the algorithm given in \cite{AY12}, but there is no need to tensor it with itself to solve $ \modfour $. Let $M_k$ be a MCQFA with two inner states defined with real numbers; moreover, $M_k$ does not need to use end-markers. The initial configuration state of $M_k$ is
\[
v_0=\myvector{1 \\ 0}.
\]
After reading blocks of size $ |a^{2^k}| $, the configuration states of $M_k$ change as follows:
		\[
			v_0=\myvector{1 \\ 0} \xrightarrow{a^{2^k}}
			\myvector{0 \\ 1} \xrightarrow{a^{2^k}}
			\myvector{-1 \\ 0} \xrightarrow{a^{2^k}}
			\myvector{0 \\ -1} \xrightarrow{a^{2^k}}
			\myvector{1 \\ 0} \xrightarrow{a^{2^k}}
			\myvector{0 \\ 1} \xrightarrow{a^{2^k}}
			\cdots
		\]
		We can simulate this computation using a 3-state AfA; hence, an affine configuration state becomes
		\[
			\myvector{1 \\ 0 \\ 0} \text{ and } \myvector{0 \\ 1 \\ 0}
		\]
		after reading the $ 4j $-th and $ (4j+1) $-th blocks for $ j \geq 0 $. This suffices to exactly solve $ \modfour $.
	\qed
	\end{proof}
Using the techniques given in \cite{AY12,RY14A}, we can show that any bounded-error PFA (and some other classical automata models \cite{GefY15A}) requires at least $ 2^{k+1} $ states to solve $ \modfour $. 
	
In summary, we can say that $ \modtwo $ (and so $ \modfour $) is a classically expensive promise problem, but inexpensive for quantum and affine automata. As further examples, in the same line of research, a classically expensive generalized version of $ \modtwo $  was defined in \cite{GQZ15}, in which was shown that the same expensive language can be solved by 3-state MCQFAs exactly; furthermore, a classically expensive function version of $ \modtwo $ was defined in \cite{AGKY14A}, which was shown to be solved by width-2 quantum OBDDs exactly. Trivially, all quantum results for these families of promise problems are inherited  for affine models.

\section{Unary Languages Recognized by Affine Automata with Two Inner States}\label{sec:unary}

All of our results of the previous sections, excepting the succintness results of Section \ref{sec:succ}, were obtained for languages defined over generic alphabets. Hence, using the superiority result of QFAs over PFAs given in \cite{GaiY15A}, it immediately follows that AfAs computing unary languages are more powerful than unary PFAs with bounded-error on promise problems.

In this section, we give a complete characterization of the unary languages recognized by 2-state AfAs with cutpoint. It is known that 2-state unary PFAs can recognize only a few regular languages, whereas 2-state unary QFAs (with transitions defined over $ \mathbb{R}$) can recognize uncountable many languages \cite{Paz71,SY16A}. Here we obtain an analogous result to PFAs with the difference that AfAs can recognize more regular languages.

Consider the following unary regular languages over $ \Sigma = \{a\} $; the empty language $ \emptylang $,
$ \E =\{a\}^*$, $ \less{n} = \{ a^i \mid i \leq n \} $ for $ n \geq 0 $, and $ \even = (aa)^* $.
 
The complete list of languages recognized by 2-state unary PFAs with cutpoint are 
	$ \E $,
	$ \less{n} $, 
	$ \less{n} \cap \even $, 
	$ \less{n} \cap \overline{\even} $, 
	$ \overline{\less{n}} \cap \even $, 
	$ \overline{\less{n}} \cap \overline{\even} $,
	and the complement of each of these languages, with $ n \geq 0 $ \cite{SY16A}.

The main result of this section is the following. Let $\interval{k}{l} = \{ a^i \mid k \leq i \leq l \}$ for $ 1 \leq k < l $.

\begin{theorem}
\label{the:cha}
The only unary regular languages recognized by AfAs with 2 inner states are the languages recognized by 2-state unary PFAs with cutpoint and additionally $ \interval{k}{l} \cap \even $, 
	$ \interval{k}{l} \cap \overline{\even} $, 
	$ \overline{\interval{k}{l}} \cap \even $, and
	$ \overline{\interval{k}{l}} \cap \overline{\even} $.
\end{theorem}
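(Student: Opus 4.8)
The plan is to reduce the behaviour of a two-state AfA on unary inputs to a scalar linear recurrence, read off the accept-value sequence in closed form, and classify its super-cutpoint set; the ``only'' in the statement then requires both a converse (every machine lands in the list) and a realizability (every listed language occurs) direction.

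First I would exploit that every affine matrix has all column sums $1$, so the row vector $(1,1)$ is a left eigenvector with eigenvalue $1$; for a $2\times2$ affine matrix this forces the second eigenvalue $\mu=\mathrm{tr}(A_a)-1$ to be \emph{real}. Writing each configuration as $(p,1-p)^{T}$, the map $A_a$ becomes the one-dimensional affine map $p\mapsto\mu p+\beta$, so after reading $\cent a^{n}$ the first coordinate is $p_n=p^{*}+(p_0-p^{*})\mu^{n}$ with $p^{*}=\beta/(1-\mu)$ (and $p_n=p_0+n\beta$ when $\mu=1$). Folding in the single applications of the affine maps $A_\cent$ and $A_\dollar$, the first coordinate of the final vector acquires the closed form $y_n=Y^{*}+C\mu^{n}$. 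Taking $E_a=\{e_1\}$ without loss of generality (empty and full accept sets are trivial, and $E_a=\{e_2\}$ is the complementary case), the accept value becomes $f_M(a^n)=g(y_n)$ with $g(y)=|y|/(|y|+|1-y|)$.

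Next I would record the elementary shape of $g$: $g(0)=0$, $g(1)=1$, $g(y)\to\tfrac12$ as $y\to\pm\infty$, strictly decreasing on $(-\infty,0)$ and on $(1,\infty)$ and strictly increasing on $[0,1]$. Hence the super-cutpoint set $\{y:g(y)>\lambda\}$ is a bounded interval inside $(\tfrac12,\infty)$ when $\lambda>\tfrac12$, the half-line $(\tfrac12,\infty)$ when $\lambda=\tfrac12$, the complement of a bounded interval around $0$ when $\lambda<\tfrac12$, or a trivial set. The recognized language is exactly the pullback of one of these geometric sets under $n\mapsto y_n$, so everything reduces to understanding which index sets arise. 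The decisive invariant is $\mu$: its magnitude against $1$ governs convergence versus divergence of $y_n$, while its sign governs whether $\mu^{n}$ alternates, i.e.\ whether a parity ($\even$) splitting appears. I would then run the cases: (i) $0\le\mu\le1$, where $y_n$ is eventually monotone and bounded, yielding the non-parity families $\E$, $\emptylang$, $\less{n}$ and complements; (ii) $\mu>1$, where $y_n$ is strictly monotone to $\pm\infty$, so $g(y_n)$ is unimodal and the super-cutpoint indices form one block, giving $\interval{k}{l}$ (and, by complementation, $\overline{\interval{k}{l}}$) or an initial threshold; (iii) $\mu=-1$, where $y_n$ takes two values and the language is $\even$, $\overline{\even}$ or trivial; (iv) $-1<\mu<0$, where the alternation decays so parity survives only on a finite initial segment, producing $\less{n}\cap\even$ and its relatives; and (v) $\mu<-1$, where the even- and odd-indexed subsequences of $y_n$ run off monotonically to opposite infinities. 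In case (v) one tracks the two parity classes separately: inside one class the super-cutpoint indices again form an interval, while the other class is (for $\lambda\neq\tfrac12$) eventually excluded, producing exactly $\interval{k}{l}\cap\even$, $\interval{k}{l}\cap\overline{\even}$ and their complements. For realizability, the PFA families come for free since any stochastic two-state machine is an AfA with the same accept value, and each interval--parity family is produced by choosing $\mu<-1$, together with $\beta$, $A_\cent$, $A_\dollar$ and $\lambda$ tuned to place the window $(\lambda,\lambda/(2\lambda-1))$ over the desired lengths.

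I expect the main obstacle to be case (v) together with the boundary case (iv): one must handle two interleaved monotone subsequences at once, verify the outcome across all three cutpoint regimes ($\lambda<\tfrac12$, $\lambda=\tfrac12$, $\lambda>\tfrac12$), and argue that within a parity class only a genuine consecutive block (rather than some more exotic index set) can occur, while conversely every pair $k<l$ is attainable. Additional care is needed to confirm that the degenerate subcases ($C=0$, $\mu=0$, $\mu=1$ with $\beta=0$, and cutpoint exactly $\tfrac12$) collapse into already-listed families and introduce nothing new.
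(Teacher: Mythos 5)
Your route is essentially the paper's: the eigenvalue $\mu$ you extract from the column-sum-one structure is exactly the paper's $t=1-p-q$, your closed form $y_n=Y^{*}+C\mu^{n}$ is the paper's $E=F+Ct^{j}$, your analysis of $g(y)=|y|/(|y|+|1-y|)$ and its super-cutpoint sets reproduces the paper's Figure~\ref{fig:accept-prob} discussion, and your five cases on $\mu$ match the paper's split into $p+q=0$ (Lemmas~\ref{lem:lessn} and~\ref{lem:interval} plus Appendix~\ref{sec:pq}) and $p+q\neq 0$ with subcases $0<t<1$, $t>1$, $-1<t<0$, $t=-1$, $t<-1$ (Lemma~\ref{lem:char}). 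One local slip: in your case (i), the subcase $\mu=1$, $\beta\neq 0$ gives the unbounded linear drift $y_n=y_0+n\beta$, not a bounded orbit; this is precisely the paper's $p+q=0$ case and the source of the plain $\interval{k}{l}$ and $\overline{\interval{k}{l}}$ languages. Your case (ii) also produces those families, so your final catalogue is not damaged, but the case description must be corrected.

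The genuine gap is the one you yourself flag in case (v), and your proposal does not close it --- nor, to be fair, does the paper, which compresses it into ``a careful analysis of the graph.'' For $\mu<-1$ and $\lambda>\frac{1}{2}$ the super-cutpoint set is the bounded window $\bigl(\lambda,\frac{\lambda}{2\lambda-1}\bigr)$, and \emph{both} monotone divergent parity subsequences can spend a nonempty block of indices inside it: your statement that the other parity class is ``eventually excluded'' bounds only its tail, not its initial segment. Concretely, take $E_j=0.9+0.005\,(-2)^{j}$, realizable with $A_a=\mymatrix{rr}{1 & 3\\ 0 & -2}$, a suitable $A_{\cent}$ giving initial vector $\myrvector{2\\-1}$, and $A_{\dollar}=\mymatrix{cc}{0.9 & 0.895\\ 0.1 & 0.105}$, with cutpoint $\lambda=\frac{3}{4}$. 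Then $E_j\in\bigl(\frac{3}{4},\frac{3}{2}\bigr)$ exactly for $j\in\{0,1,2,3,4,6\}$, since $E_5=0.74$, $E_6=1.22$, $E_7=0.26$, $E_8=2.18$, and both tails have accept value below $\frac{3}{4}$. The accepted language $\{\varepsilon,a,a^{2},a^{3},a^{4},a^{6}\}$ is a union of an even block and an odd block and belongs to none of the families in the statement (it is finite with a gap, so it is not $\less{n}$, not an $\interval{k}{l}$, not an interval intersected with $\even$ or $\overline{\even}$, nor a complement of any of these). So the classification asserted in your case (v) fails as literally stated, and the same objection applies to the paper's Lemma~\ref{lem:char}; a complete proof must either account for these two-block languages in the catalogue or impose constraints excluding them, which is exactly the extra work required at the step you identified as the obstacle.
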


The remaining of this section is devoted to the proof of Theorem \ref{the:cha}. To that end, first we will consider the computation of a 2-state unary AfA $ M $, which is inspired by a 2-state unary PFA  of \cite{SY16A}. Let $ \{e_1,e_2\} $ be the only inner states of $ M $. With no loss of generality, we assume that the initial and only accepting state is $ e_1 $. The affine transformations for symbols $ a $ and $\dollar$  are
\begin{eqnarray}
	A_a &=& \mymatrix{ccc}{ 1 - q & ~~ & p \\ q & & 1-p  }\text{ and }\label{eq:aa}\\
	A_\dollar &=& \mymatrix{ccc}{ f_1 & ~~ & f_2 \\ 1-f_1 & & 1-f_2  }\label{eq:adollar},
\end{eqnarray}
respectively, for some real numbers $ p,q,f_1$ and $f_2$. 

Let $ v_f = \myvector{x \\ 1-x} $ be the final configuration vector of string $ a^j $ ($ j \geq 0 $). The accept probability of $ M $ on $ a^j $ is $f_M(a^j) = \dfrac{-x}{1-2x} =  \dfrac{1}{2} + \dfrac{1}{4x-2}$ when $x<0$ and $x>1$, and $f_M(a^j) = x$ when $0\leq x \leq 1$.

\begin{lemma}
If $ p=q=0 $ in Eq.(\ref{eq:aa}), then $ \E $ and $ \emptylang $ can be recognized by AfAs with 2 states.
\end{lemma}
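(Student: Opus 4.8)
The plan is to exploit the collapse of $A_a$ into the identity when $p=q=0$. First I would note that substituting $p=q=0$ into Eq.(\ref{eq:aa}) gives
\[
A_a = \mymatrix{cc}{ 1 & 0 \\ 0 & 1 },
\]
so reading the symbols $a$ never alters the configuration. Consequently, for every input $a^j$ the vector that reaches the right end-marker is one and the same, say $\myvector{x_0 \\ 1-x_0}$, and applying $A_\dollar$ produces an identical final vector $v_f$ regardless of $j$. Hence the accept value $f_M(a^j)$ equals a single constant $c$ that is independent of $j$.

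Given this, the recognized language is forced to be trivial: for any cutpoint $\lambda\in[0,1)$ the set $\{ a^j \mid f_M(a^j) > \lambda \}$ is all of $\{a\}^*$ when $c>\lambda$ and empty when $c\leq\lambda$. So the only candidates are $\E$ and $\emptylang$, and it remains to check that each is actually attainable. For $\E$ I would take every transition matrix, including $A_\dollar$, to be the identity; then the final first entry is $x=1$, so $f_M(a^j)=x=1$ by the branch $0\le x\le 1$ of the accept-value formula, and the cutpoint $\lambda=\frac{1}{2}$ accepts every string. For $\emptylang$ I would instead set $f_1=0$ in Eq.(\ref{eq:adollar}), which sends the final vector to $\myvector{0 \\ 1}$, giving $f_M(a^j)=0$; the cutpoint $\lambda=\frac{1}{2}$ then rejects every string.

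I do not expect any real obstacle here, since the whole argument rests on the single observation that $A_a$ becomes the identity. The only points deserving a moment of care are verifying that the two chosen $A_\dollar$ are genuine affine operators (each column summing to $1$) and that in both constructions the final first entry lies in $[0,1]$, so that the simpler branch $f_M(a^j)=x$ of the accept-value formula applies; both are immediate for the explicit matrices above.
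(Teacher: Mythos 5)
Your proposal is correct and follows essentially the same route as the paper: the paper's proof consists precisely of the observation that $p=q=0$ makes $A_a$ the identity, so $f_M$ is a constant function and hence $M$ recognizes $\E$ or $\emptylang$. Your only addition is to spell out explicit choices of $A_\dollar$ witnessing each of the two constants, which the paper leaves implicit.
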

\begin{proof}
It is clear that if $ p=q=0 $, then $ A_a $ is the identity, and hence $ f_M $ is a constant function on $ \Sigma^* $. Thus, $ M $ can recognize $ \E $ and $ \emptylang $.\qed
\end{proof}

For the remaining of this section, we assume that at least one of $p$ or $q$ is non-zero.

\begin{lemma}
\label{lem:lessn}
There exists $p\in \mathbb{R}$ satisfying $ p+q = 0 $ in Eq.(\ref{eq:aa}) such that $M$ recognizes $ \less{n}$.
\end{lemma}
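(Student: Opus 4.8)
The plan is to set $ q=-p $ — which is exactly the hypothesis $ p+q=0 $ — and to exploit the fact that this choice makes the first coordinate of the configuration vector an \emph{affine} function of the input length, so that a cutpoint of exactly $ \frac{1}{2} $ turns language recognition into a single linear threshold. First I would substitute $ q=-p $ into Eq.(\ref{eq:aa}), obtaining
\[
	A_a = \mymatrix{rr}{ 1+p & p \\ -p & 1-p }.
\]
Starting from $ v_0 = \myvector{1 \\ 0} $ and taking both end-marker operations to be the identity, a one-line induction on $ j $ shows that
\[
	A_a^{\,j} v_0 = \myvector{ 1+jp \\ -jp }.
\]
Hence the first entry of the final vector on input $ a^j $ is $ x = 1+jp $, which is strictly monotone in $ j $ as soon as $ p\neq 0 $.

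The decisive step is the choice of cutpoint. As recorded above, $ f_M $ equals $ x $ on $ [0,1] $ but switches to the branch $ \frac{-x}{1-2x} $ for $ x<0 $, and that branch climbs back toward $ \frac{1}{2} $ as $ x\to-\infty $; consequently $ f_M $ is \emph{not} monotone in $ x $, first dropping to $ 0 $ at $ x=0 $ and then rising again. I would neutralize this by fixing the cutpoint at $ \lambda=\frac{1}{2} $, the common limiting value of the two branches. A short case check on the two expressions for $ f_M $ yields the clean equivalence $ f_M(a^j)>\frac{1}{2} $ if and only if $ x>\frac{1}{2} $.

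Finally I would take $ p<0 $, so that $ x=1+jp $ strictly decreases in $ j $ and stays $ \le 1 $, and calibrate the slope so the threshold $ x>\frac{1}{2} $ cuts between $ j\le n $ and $ j>n $. The explicit choice $ p=-\frac{1}{2n+1} $ gives $ x = 1-\frac{j}{2n+1} $, whence $ x>\frac{1}{2} $ iff $ j<n+\frac{1}{2} $ iff $ j\le n $, and no integer $ j $ lands exactly on $ x=\frac{1}{2} $. Therefore $ \{\,a^j \mid f_M(a^j)>\frac{1}{2}\,\} = \less{n} $, which is the desired recognition.

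I expect the non-monotonicity of $ f_M $ to be the only genuine obstacle: because $ f_M\to\frac{1}{2} $ from below for large $ j $, any cutpoint $ \lambda<\frac{1}{2} $ would readmit arbitrarily long strings and destroy the equality with $ \less{n} $, while a naive monotonicity argument is simply unavailable. Pinning $ \lambda=\frac{1}{2} $ is exactly what collapses recognition to the linear inequality $ 1+jp>\frac{1}{2} $ and makes the construction go through.
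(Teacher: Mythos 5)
Your proposal is correct and follows essentially the same route as the paper: with $q=-p$ the first coordinate becomes the affine function $m+jp$ of the input length (you take $m=1$ via identity end-markers), and recognition of $\less{n}$ is read off the accept-probability curve, exactly the configuration the paper's appendix analyzes as the case $\lambda=\tfrac12$, $E>\lambda$, $C<0$. You merely make explicit what the paper leaves to its figure — the choice $p=-\tfrac{1}{2n+1}$, the cutpoint $\lambda=\tfrac12$, and the equivalence $f_M(a^j)>\tfrac12 \iff x>\tfrac12$ — all of which check out, including your correct observation that any cutpoint below $\tfrac12$ would fail since $f_M\to\tfrac12$ from below as $j\to\infty$.
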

\begin{proof}
	Suppose that $ p+q = 0 $ in $ A_a $. Then, we have
\[
	A_a = \mymatrix{ccc}{ 1+p & ~~ & p \\ -p & & 1-p }.
\]
If the initial state is $ \myvector{m \\ 1-m} $, then we can obtain:
\[
	\myvector{m+p \\ 1-m-p} = \mymatrix{ccc}{ 1+p & ~~ & p \\ -p & & 1-p } \myvector{m \\ 1-m}
\]
and then
\[
	\myvector{m+jp \\ 1-m-jp} = \mymatrix{ccc}{ 1+p & ~~ & p \\ -p & & 1-p }^j \myvector{m \\ 1-m}
\]
for $ j > 1 $. The accept probability of $M$ can be seen in Figure \ref{fig:accept-prob}.

If we start at point $ m $ in Figure \ref{fig:accept-prob}, we shift either left or right with amount of $ |p| > 0 $ for each reading symbol. In this way, it is easy to see that $ M $ can recognize $ \less{n} $ for any $ n \geq 0 $.
\begin{figure}[t]
\centering
\includegraphics[scale=0.5]{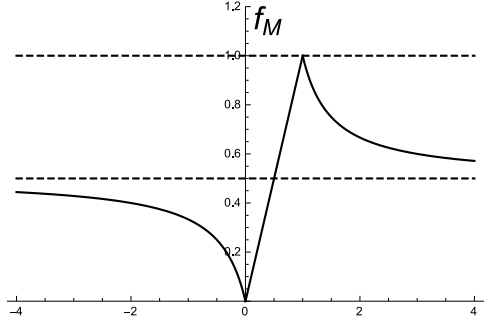}
\caption{Accept probability of $M$.}
\label{fig:accept-prob}
\end{figure}	
\qed
\end{proof}

\begin{corollary}
There exists $p\in \mathbb{R}$ satisfying $ p+q = 0 $ in Eq.(\ref{eq:aa}) such that $M$ recognizes $\overline{ \less{n} }$.
\end{corollary}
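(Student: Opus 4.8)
The plan is to reuse the machine of Lemma~\ref{lem:lessn} essentially verbatim, changing only the end-marker operator $A_\dollar$ so that the reported accept probability becomes $1-f_M$ instead of $f_M$. Concretely, I would keep $A_a$ exactly as in the lemma (so that $p+q=0$ and the pre-end-marker configuration after reading $a^j$ is $\myvector{m+jp \\ 1-m-jp}$), and set $A_\dollar = \mymatrix{cc}{0 & 1 \\ 1 & 0}$, i.e.\ $f_1=0$ and $f_2=1$ in Eq.~(\ref{eq:adollar}). This is a legitimate affine matrix (each column sums to $1$) and it simply swaps the two coordinates of the final configuration.

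The key computation is to check that this swap complements the accept probability. If $\myvector{x \\ 1-x}$ is the configuration reached before the end-marker, then after $A_\dollar$ the final configuration is $\myvector{1-x \\ x}$, so the accept coordinate is $x'=1-x$. Substituting $x'=1-x$ into the accept-probability formula stated just before Lemma~\ref{lem:lessn} — in all three regimes $0\le x\le 1$, $x<0$, and $x>1$ — gives $f_{M'}(a^j)=1-f_M(a^j)$, where $f_M$ denotes the accept probability of the $\less{n}$-machine. (This is the same phenomenon as moving the single accept state from $e_1$ to $e_2$, since by Eq.~(\ref{eq:acc-val}) the two single-state accept probabilities always sum to $1$.)

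It then remains to choose a cutpoint. Since $\less{n}$ is recognized by the original machine with some cutpoint $\lambda$, every $j\le n$ satisfies $f_M(a^j)>\lambda$ and every $j>n$ satisfies $f_M(a^j)\le\lambda$; hence $f_{M'}(a^j)<1-\lambda$ for $j\le n$ and $f_{M'}(a^j)\ge 1-\lambda$ for $j>n$. Because $\less{n}$ is finite, the finitely many values $\{f_{M'}(a^j):j\le n\}$ attain a maximum, which I would take as the new cutpoint $\lambda':=1-\min_{0\le j\le n}f_M(a^j)$. Then $\lambda'<1-\lambda$, every non-member $a^j$ ($j\le n$) has $f_{M'}(a^j)\le\lambda'$, and every member $a^j$ ($j>n$) has $f_{M'}(a^j)\ge 1-\lambda>\lambda'$, so the modified machine recognizes $\overline{\less{n}}$ with cutpoint $\lambda'$.

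The only genuine obstacle is the boundary (strict-versus-nonstrict) issue inherent to cutpoint recognition: complementing a language recognized by the open condition $f_M>\lambda$ naively yields the closed condition $f_M\le\lambda$, which is not itself a cutpoint acceptance. Finiteness of $\less{n}$ dissolves this cleanly, since the reject values of the complementing machine form a finite set bounded strictly below its accept values, so a separating cutpoint genuinely exists and no isolation hypothesis beyond what Lemma~\ref{lem:lessn} already supplies is needed.
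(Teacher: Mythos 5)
Your proof is correct. The coordinate swap $A_\dollar=\left(\begin{smallmatrix}0&1\\1&0\end{smallmatrix}\right)$ is a legitimate affine (indeed stochastic) operator, and for a two-state AfA the normalized weights of $e_1$ and $e_2$ sum to $1$ by Eq.~(\ref{eq:acc-val}), so $f_{M'}=1-f_M$ holds exactly in all three regimes. Your handling of the strict-versus-nonstrict boundary is the right thing to worry about and your fix is sound: since $\less{n}$ is finite, $\lambda':=\max_{0\le j\le n}f_{M'}(a^j)$ exists and is strictly below $1-\lambda$, while every member of $\overline{\less{n}}$ has $f_{M'}\ge 1-\lambda>\lambda'$, and $\lambda'\in[0,1)$ as required. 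Your route, however, differs from the paper's. The paper states the corollary without a separate proof; its implicit argument is by symmetry of the construction in Lemma~\ref{lem:lessn} --- on the number line of Figure~\ref{fig:accept-prob} one simply chooses the starting point $m$ and the drift direction (the sign of $p$) so that the accept probability starts at or below the cutpoint and crosses it permanently after $n$ steps, and Appendix~\ref{sec:pq} substantiates this with an exhaustive case analysis (e.g.\ the regime $\frac{\lambda}{2\lambda-1}<E<\lambda$ yields $\overline{\less{n}}$ for any sign of $C$). Your argument is instead a black-box complementation: it never re-examines the graph, and it isolates exactly where finiteness is needed (to restore an open cutpoint condition after complementing), which makes it arguably cleaner and reusable for complementing any AfA whose rejected set is finite. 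What it does not give you is the parameter catalog of the appendix, which the paper needs anyway for the full characterization in Theorem~\ref{the:cha}; the direct analysis buys that for free.
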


\begin{lemma}
	\label{lem:interval}
	There exists  $p\in \mathbb{R}$ satisfying $p+q=0$ such that $M$ recognizes $\interval{k}{l}$ with cutpoint $3/4$.
\end{lemma}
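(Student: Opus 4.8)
The plan is to reduce the lemma to a one-dimensional ``packing'' question on the real line governed by the acceptance function. The first step is to determine exactly for which values of the first coordinate $x$ of the final vector one has $f_M(a^j) > \frac{3}{4}$. Using the piecewise formula recalled just before the lemma (namely $f_M = x$ for $0 \le x \le 1$, and $f_M = \frac{1}{2} + \frac{1}{4x-2}$ for $x<0$ and $x>1$), a short case analysis gives $f_M(x) = \frac{3}{4}$ exactly at $x = \frac{3}{4}$ on the linear branch and at $x = \frac{3}{2}$ on the $x>1$ branch, and shows that $f_M(x) > \frac{3}{4}$ holds precisely for $x \in \left(\frac{3}{4},\frac{3}{2}\right)$; for every $x \le \frac{3}{4}$ and every $x \ge \frac{3}{2}$ we get $f_M(x) \le \frac{3}{4}$ (in particular the branch $x<0$ stays below $\frac{1}{2}$). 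Thus the acceptance region in coordinate space is the single open interval $W = \left(\frac{3}{4},\frac{3}{2}\right)$ of width $\frac{3}{4}$, as suggested by Figure~\ref{fig:accept-prob}.

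Next I would exploit the hypothesis $p+q=0$. Exactly as in Lemma~\ref{lem:lessn}, when $q=-p$ the first coordinate after reading $a^j$ from a post-left-end-marker state $(m,\,1-m)^{T}$ equals $x_j = m + jp$, an arithmetic progression with common difference $p$. Here the left end-marker operator $A_{\cent}$ is used freely to realize any chosen offset $m$, and $A_{\dollar}$ is taken to be the identity (i.e. $f_1 = 1$, $f_2 = 0$) so that $x_j$ is the first coordinate of the final vector. With this reduction, recognizing $\interval{k}{l}$ with cutpoint $\frac{3}{4}$ amounts to choosing $p$ and $m$ so that the marching points $x_j$ lie in $W$ for exactly those $j$ with $k \le j \le l$.

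For the third step I would take a positive step moving rightward and fix the parameters explicitly. Writing $N = l-k+1$, set $p = \frac{3}{4N} > 0$ and center the $N$ relevant points in $W$ by choosing $m = \frac{9}{8} - \frac{k+l}{2}\,p$, so that $x_k$ and $x_l$ are symmetric about the midpoint $\frac{9}{8}$ of $W$. A direct computation then verifies four boundary conditions: $x_k, x_l \in W$, while the flanking points satisfy $x_{k-1} \le \frac{3}{4}$ and $x_{l+1} \ge \frac{3}{2}$. Each condition reduces to an elementary inequality (e.g. $x_k > \frac34$ collapses to $\frac{3}{N} > 0$) that holds for every $N \ge 1$. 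Since $p>0$ the sequence $x_j$ is strictly increasing, so every $j<k$ gives $x_j \le x_{k-1} \le \frac34$ and every $j>l$ gives $x_j \ge x_{l+1} \ge \frac32$; in both cases $x_j \notin W$ and hence $f_M(a^j) \le \frac34$. Combined with $x_k,\dots,x_l \in W$, this yields $\{\,a^j : f_M(a^j) > \tfrac34\,\} = \interval{k}{l}$.

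The step I expect to require the most care is the simultaneous packing in the third step: the $N$ required points must all land strictly inside a window of fixed width $\frac34$ while the two immediately neighboring points land outside it (one at most $\frac34$, the other at least $\frac32$). These two demands pull $p$ in opposite directions—small $p$ accommodates the interior points but risks trapping the neighbors inside $W$, whereas large $p$ ejects the neighbors but may expel interior points—so the argument hinges on matching the spacing $p$ to $N$ and using the free offset $m$ to align the window. The choice $p = \frac{3}{4N}$ makes all four inequalities reduce to trivially true statements, which is what makes the construction succeed uniformly for all $1 \le k < l$. Finally I would note that a degenerate phase (a flanking point landing exactly on $\frac34$ or $\frac32$) is harmless, since such a point gives $f_M = \frac34 \not> \frac34$ and is therefore correctly excluded from the language.
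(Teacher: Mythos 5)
Your proposal is correct and follows essentially the same route as the paper: under $p+q=0$ the first coordinate is the arithmetic progression $m+jp$, and you tune $p$ and $m$ so that exactly the points $j=k,\dots,l$ land in the window $\left(\frac{3}{4},\frac{3}{2}\right)$ where $f_M>\frac{3}{4}$, with $A_\dollar$ the identity. The only differences are cosmetic — you march rightward with $p=\frac{3}{4(l-k+1)}$ and place the flanking points strictly outside the window, whereas the paper marches leftward with $p=-\frac{3}{4(l-k)+8}$ and lands the flanking points exactly on the boundary values $\frac{3}{2}$ and $\frac{3}{4}$ (and you derive the window explicitly from the acceptance formula rather than reading it off Figure~\ref{fig:accept-prob}).
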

\begin{proof}
	Let $ l-k = d > 0 $ and the cutpoint be $ \frac{3}{4} $. We pick $ m $ as $ \frac{3}{2} + \frac{3}{4d+8}(k-1) $ and starts in state $ \myvector{m \\ 1-m} $ after reading the left end-marker. We apply $ A_a $ for each symbol where $ p = - \frac{3}{4d+8} $. That means, on the number line in Figure \ref{fig:accept-prob}, we start at  $ m $ and then shift to the left with value $ \frac{3}{4d+8} $ for each scanned symbol. We pick $ A_\dollar $ as identity. After reading $ (k-1) $ symbols, the first entry of $ v_f $ will be
\[
	\frac{3}{2} + \frac{3}{4d+8}(k-1) - (k-1) \frac{3}{4d+8} = \frac{3}{2}.
\]
Then, $ f_M(a^{k-1}) = \frac{|\frac{3}{2}|}{|\frac{3}{2}|+|-\frac{1}{2}|} \frac{3}{4} $, and so $ f_M(a^j) \leq \frac{3}{4} $ for any $ j < k $. After reading one more symbol, the accepting probability ($ f_M(a^k) $) becomes more than $ \frac{3}{4} $. As can be easily seen from the graph, by continuing to read input symbols, the accepting probability may stay over $ \frac{3}{4} $ for a while, and then it always stays below $ \frac{3}{4} $. More specifically, if we read $ (d+2) $ more symbols (after reading $ (k-1) $ symbols), the first entry of $ v_f $ will be
\[
	\frac{3}{2} - (d+2) \frac{3}{4d+8} = \frac{3}{2} - \frac{3}{4} = \frac{3}{4}.
\]
So, the accepting probability hits $ \frac{3}{4} $ again for the string $ a^{k-1+d+2} = a^{k+l-k+1} = a^{l+1} $. Therefore, the accepting probabilities of all strings
\[
	a^k, a^{k+1},\ldots,a^l
\]
are more than $ \frac{3}{4} $, and the accepting probabilities of all other strings are at most $ \frac{3}{4} $.
\qed
\end{proof}

\begin{corollary}\label{cor:not-interval}
The language $ \overline{\interval{k}{l}}$ can be recognized by a 2-state AfA with cutpoint $ \frac{3}{4} $.
\end{corollary}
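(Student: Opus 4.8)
The plan is to reuse the geometry of Figure \ref{fig:accept-prob} but, crucially, to abandon the pure translation used in Lemmas \ref{lem:lessn} and \ref{lem:interval}. Since every length $a^j$ with $j>l$ lies in $\overline{\interval{k}{l}}$ and must therefore be accepted with cutpoint $\tfrac34$, the first coordinate of the configuration cannot be allowed to drift to $\pm\infty$, where the accept value tends to the asymptote $\tfrac12<\tfrac34$. So I would take $A_a$ of the form \eqref{eq:aa} with $p+q\neq0$, making it a genuine contraction: its fixed point has first coordinate $x_\infty=\tfrac{p}{p+q}$, and I would place $x_\infty$ strictly inside the peak window $(\tfrac34,\tfrac32)$ while forcing the second eigenvalue $1-p-q$ to satisfy $|1-p-q|<1$. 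Then $f_M(a^j)\to f(x_\infty)>\tfrac34$, so every sufficiently long input is accepted at cutpoint $\tfrac34$, which secures the infinite tail $\{a^j:j>l\}$ of $\overline{\interval{k}{l}}$; the languages $\emptylang$ and $\E$ and the end-markers are handled exactly as before.

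With the tail secured, the remaining task is to pre-position the trajectory so that the finite initial segment $\{a^j:j<k\}$ is \emph{also} accepted while the block $a^k,\dots,a^l$ is rejected. Reading the left end-marker, I would use $A_\cent$ to place the starting coordinate $x_0$ inside the window $(\tfrac34,\tfrac32)$ (so that $j<k$ is accepted), chosen relative to $x_\infty$ so that the contraction first carries the coordinate \emph{out} of the window---producing the rejections for $j=k,\dots,l$---before the shrinking amplitude brings it back inside for all $j>l$. The target lengths $k$ and $l$ are then controlled by the single ratio of $|x_0-x_\infty|$ to the contraction factor $|1-p-q|$, playing the role that the shift amount $|p|$ played in Lemma \ref{lem:interval}, with $A_\cent$ and $A_\dollar$ supplying the extra degrees of freedom needed to hit prescribed values.

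The step I expect to be the main obstacle is exactly this ``leave-and-return'' requirement. A real scalar contraction makes $|x_j-x_\infty|$ \emph{monotone} decreasing, so a trajectory that starts inside a window centered at its own fixed point can never leave it, and the set of accepted lengths would degenerate to a single tail rather than the union of an initial segment and a tail. Overcoming this is where the \emph{sign} of the second eigenvalue becomes essential: I would take $1-p-q<0$, so that consecutive first coordinates land on opposite sides of $x_\infty$, and I would place the window $(\tfrac34,\tfrac32)$ \emph{asymmetrically} about $x_\infty$ so that the two sides offer different clearance. Matching this asymmetry to the geometric decay so that the first few coordinates are thrown onto the short side (outside the window, yielding the rejections $k,\dots,l$) while every later coordinate stays on the accepting side is the delicate heart of the construction; once it is arranged for given $k<l$, the only routine checking left is the strict cutpoint at the two boundary lengths $j=k-1$ and $j=l+1$, handled by a small perturbation of $x_0$ exactly as in Lemma \ref{lem:interval}.
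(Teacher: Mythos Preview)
Your plan has a genuine gap precisely at the step you call ``the delicate heart.'' With $t=1-p-q\in(-1,0)$ and first coordinate $x_j=x_\infty+Ct^{\,j}$, the even subsequence $x_{2m}=x_\infty+C(t^2)^m$ is monotone and runs from $x_0$ to $x_\infty$. You place $x_0\in(\tfrac34,\tfrac32)$ so that the empty string is accepted, and $x_\infty\in(\tfrac34,\tfrac32)$ so that the tail is accepted; but then every $x_{2m}$ lies between $x_0$ and $x_\infty$ and hence inside the convex window $(\tfrac34,\tfrac32)$. Consequently \emph{no} even index is ever rejected, whereas the block $[k,l]$ with $k<l$ always contains an even index. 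No amount of asymmetry in the placement of $x_\infty$, and no choice of $A_\cent$ or $A_\dollar$, can repair this: the same argument on the even and odd subsequences (together with the easier monotone cases $t>0$ and $|t|\ge1$) shows that in fact \emph{no} $2$-state AfA recognizes $\overline{\interval{k}{l}}$ at cutpoint $\tfrac34$ when $1\le k<l$. Your correct instinct that the pure translation of Lemma~\ref{lem:interval} cannot do the job at cutpoint $\tfrac34$ thus extends to every $2$-state machine.

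The paper treats the corollary as immediate from Lemma~\ref{lem:interval}: swap the accepting state from $e_1$ to $e_2$, so that the accept value becomes $1-f_M(w)$ and the complement is recognized at cutpoint $\tfrac14$ (after the usual tiny perturbation to avoid the two boundary equalities). Equivalently, reuse the linear drift $p+q=0$ but steer the trajectory through the trough of Figure~\ref{fig:accept-prob} near $x=0$ rather than the peak near $x=1$; this is exactly the case $\lambda<\tfrac12$ carried out in Appendix~\ref{sec:pq}. The printed value $\tfrac34$ is therefore a slip for $\tfrac14$, and the intended argument needs no contraction dynamics at all.
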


From Lemma \ref{lem:interval} and Corollary \ref{cor:not-interval} we conclude that AfAs with two inner states can recognize more languages than PFAs with two inner states. Moreover, for the case of $p+q=0$, there are no more regular unary languages recognized by AfAs with two states (see Appendix \ref{sec:pq}).

\begin{lemma}
\label{lem:char}
There exists $p,q\in\mathbb{R}$ satisfying $p+q\neq 0$ such that $M$ recognizes all languages recognized by 2-state unary PFAs with cutpoint and the languages
	$ \interval{k}{l} \cap \even $,
	$ \interval{k}{l} \cap \overline{\even} $, 
	$ \overline{\interval{k}{l}} \cap \even $, and
	$ \overline{\interval{k}{l}} \cap \overline{\even} $.
\end{lemma}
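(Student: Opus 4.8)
The plan is to analyze a generic two-state AfA with $p+q\neq0$ through the spectral decomposition of $A_a$ and then read off, regime by regime, which accept-probability patterns the model can realize. Since $A_a$ has column sums $1$ and determinant $1-p-q$, its eigenvalues are $1$ and $\lambda:=1-p-q\neq1$; diagonalizing, the first coordinate of the configuration after reading $a^j$ has the closed form $x_j=\frac{p}{p+q}+c\,\lambda^{j}$ for a constant $c$ fixed by the initial vector (and then, through the end-marker $A_\dollar$, by an affine reparametrization $x\mapsto f_2+(f_1-f_2)x$). Composed with the accept-value formula stated before the lemma — whose profile (cf.\ Figure \ref{fig:accept-prob}) has value $0$ at $x=0$, value $1$ at $x=1$, and tends to $\tfrac12$ as $x\to\pm\infty$ — this reduces everything to understanding how the geometric sequence $x_j$ sweeps across a fixed profile. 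First I would dispose of the PFA sublist cheaply: a two-state unary PFA is already a two-state AfA (its $A_a$ is the stochastic special case, with $p,q\ge0$, hence $p+q\neq0$ for every non-trivial machine), so each language in that sublist is inherited verbatim, with $\E$ and $\emptylang$ being the degenerate (near-)constant cases. The genuine work is the four parity-refined interval languages.

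The key observation is that $\lambda=1-p-q$ must be negative (i.e.\ $p+q>1$) to detect parity at all, and then the even subsequence $x_{2i}=x^*+c(\lambda^2)^i$ and the odd subsequence $x_{2i+1}=x^*+(c\lambda)(\lambda^2)^i$ are two monotone geometric sequences on \emph{opposite} sides of the fixed point $x^*=\frac{p}{p+q}$, sharing the ratio $\lambda^2$. I would split into the decaying regime $-1<\lambda<0$, where both subsequences converge monotonically to $x^*$, and the growing regime $\lambda<-1$, where the even subsequence diverges to one infinity and the odd to the other. In the growing regime, letting the relevant parity's monotone sweep pass through the peak and choosing a cutpoint in $(\tfrac12,1)$ isolates a single hump above the cutpoint, realizing a finite window on that parity; sending the other parity off to the reject side then yields $\interval{k}{l}\cap\even$ and $\interval{k}{l}\cap\overline\even$, with $k$, $l$ and the parity pinned down by tuning $c$, $x^*$ (through $f_1,f_2$), the initial offset, and $|\lambda|$.

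For the co-interval languages $\overline{\interval{k}{l}}\cap\even$ and $\overline{\interval{k}{l}}\cap\overline\even$, and to also recover the parity-threshold languages $\overline{\less{n}}\cap\even$ and $\overline{\less{n}}\cap\overline\even$, I would exploit the one-sided limits $f(x)\to\tfrac12^{+}$ as $x\to+\infty$ and $f(x)\to\tfrac12^{-}$ as $x\to-\infty$ together with a cutpoint placed at (or infinitesimally off) $\tfrac12$: with $\lambda<-1$ the two parities land at opposite infinities, so exactly one parity is eventually accepted and the other eventually rejected, and nudging the cutpoint toward the valley branch converts a pass through $x=0$ into a bounded reject window. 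Complementation of every constructed language (hence the ``complement of each'' entries and the $\overline\even$ variants) then comes for free by swapping the accepting state, which replaces the accept value $f$ by $1-f$ and the cutpoint $\lambda_c$ by $1-\lambda_c$.

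The step I expect to be the main obstacle is precisely this coordination between the two parities in the co-interval cases. Because the even and odd subsequences are forced to share the same fixed point $x^*$ and the same ratio $\lambda^2$, their tail behaviors are rigidly linked: realizing a bounded accept/reject window on one parity demands a cutpoint away from $\tfrac12$, yet keeping the \emph{entire} other parity outside the accept region pulls the cutpoint back toward $\tfrac12$, since that parity's unbounded tail drifts to $\tfrac12$ (and, in the decaying regime, both tails collapse to the common value $f(x^*)$). Threading this needle — pinning the cutpoint exactly at $\tfrac12$ to use the one-sided limits, taking $|\lambda|$ large enough that no stray term of the off-parity lands inside the narrow accept window, and aligning the window endpoints with $k$ and $l$ — is where the argument is most delicate, and I would expect the $\overline{\interval{k}{l}}\cap\even$ construction to be the bottleneck, likely requiring the most careful case analysis and perhaps a genericity argument to rule out accidental boundary hits.
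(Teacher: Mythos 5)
Your overall route is the same as the paper's: you locate the fixed point of $A_a$, use the eigenvalue $t=1-p-q$ to get the closed form $E_j=F+Ct^j$ for the first entry of the final vector (your $x^*+c\lambda^j$ followed by the affine reparametrization through $A_\dollar$), and then do a case analysis on $t$ against the graph of the accept-value function $f$. Your disposal of the PFA sublist by noting that a stochastic $A_a$ is already affine is a legitimate shortcut the paper leaves implicit, and your constructions for $\interval{k}{l}\cap\even$ and $\interval{k}{l}\cap\overline{\even}$ --- growing regime $t<-1$, cutpoint $\lambda\in(\frac{1}{2},1)$ so that the accept region is a bounded window around $x=1$, one parity swept through the window while the other is kept below it --- are correct, and indeed more explicit than the paper, whose proof at this point says only that ``a careful analysis of the graph'' finishes the argument.

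The gap is exactly where you predicted it, and your proposed fix (cutpoint at or infinitesimally off $\frac{1}{2}$, then ``nudging'') cannot be threaded. With cutpoint exactly $\frac{1}{2}$ one checks $f(x)>\frac{1}{2}$ iff $x>\frac{1}{2}$, so acceptance is a monotone threshold along each parity's geometric subsequence and can only yield prefix/suffix patterns per parity (the $\overline{\less{n}}\cap\even$ type), never an accepted parity containing a bounded rejected window. With cutpoint $\lambda<\frac{1}{2}$ the accept region is $(-\infty,x_-)\cup(x_+,\infty)$ with $x_-<0<x_+$; in the growing regime the two parities diverge to opposite infinities, and since $f\to\frac{1}{2}^{\pm}$ as $x\to\pm\infty$, the off-parity tail necessarily enters an accept component; in the decaying regime $-1<t<0$ both parities converge to the same value $F$, so their tails receive the same verdict (and the boundary placements $F\in\{x_-,x_+\}$ or $F=\frac{1}{2}$ only produce $\even$/$\overline{\even}$-type patterns, since the relevant accept component containing the approach side is unbounded). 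Hence no parameter choice realizes ``all odd strings rejected, even strings accepted on both sides of a bounded window,'' i.e.\ your plan does not reach $\overline{\interval{k}{l}}\cap\even$ or $\overline{\interval{k}{l}}\cap\overline{\even}$, and the rigidity you isolated (shared fixed point, shared ratio $t^2$) is a genuine obstruction rather than a technicality to be handled by genericity. Also note that your closing complementation step does not repair this: swapping the accepting state turns your $\interval{k}{l}\cap\even$ machine into one for the set complement $\overline{\interval{k}{l}\cap\even}=\overline{\interval{k}{l}}\cup\overline{\even}$, which contains all odd strings and is a different language from $\overline{\interval{k}{l}}\cap\even$ (for the same reason, complementation does not produce the $\overline{\even}$ variants of the interval languages; those you rightly build directly). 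To be fair to you, the paper's own proof offers no construction at exactly this point --- it appeals to the graph --- so your proposal should be regarded as proving the lemma for the PFA sublist and the first two of the four listed languages, with the co-interval pair obtainable only in the complemented (union) form above.
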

\begin{proof}
	First, we identify a fix point of $ A_a $:
\[
	\myvector{ \frac{p}{p+q} \\ \\ \frac{q}{p+q} } =  \mymatrix{ccc}{ 1-q & ~~ & p \\ \\ q & & 1-p } \myvector{ \frac{p}{p+q} \\ \\ \frac{q}{p+q} }.
\]
Then, we can assume that after reading the left end-marker, we are in state
\[
	\myvector{ \frac{p}{p+q} + c \\ \\ \frac{q}{p+q} -c } 
\]
for some real $ c $. After applying $ A_a $, we obtain
\[
	\myvector{ \frac{p}{p+q} + c (1-p-q) \\ \\ \frac{q}{p+q} -c (1-p-q) } = \mymatrix{ccc}{ 1-q & ~~ & p \\ \\ q & & 1-p }  \myvector{ \frac{p}{p+q} + c \\ \\ \frac{q}{p+q} -c },
\]
and after applying $ A_a^j $
\[
	\myvector{ \frac{p}{p+q} + c (1-p-q)^j \\ \\ \frac{q}{p+q} -c (1-p-q)^j } = \mymatrix{ccc}{ 1-q & ~~ & p \\ \\ q & & 1-p }^j  \myvector{ \frac{p}{p+q} + c \\ \\ \frac{q}{p+q} -c }
\]
for $ j >0 $.
Let $ \frac{p}{p+q} = r $ and $ (1-p-q) = t \neq 1 $ (remember that we assume $ p+q \neq 0 $). Then
\[
	\myvector{ r+ct^j \\ 1- r- ct^j } = A_a^j \myvector{r \\ 1 -r}.
\]
After reading the right end-marker, the final state will be
\[
	v_f = \myvector{ f_1 r - f_2 r + f_2 + (f_1-f_2)ct^j \\ \\ \overline{1} } =  \mymatrix{ccc}{ f_1 & ~~ & f_2 \\ \\ 1-f_1 & & 1-f_2  } \myvector{ r+ct^j \\ \\ 1- r- ct^j },
\]
where $ \overline{1} = 1 - v_f[1] $. Let $ F = f_1 r - f_2 r + f_2 $ and $ C = (f_1-f_2)c $. Then, the first entry of $ v_f $ becomes
\[
	E = F + Ct^j,
\]
which determines the accepting probability. For the empty string we have $ E = F+C $. Another trivial case is $ C = 0 $ where we have a fixed accept probability for all strings. Assume that $ C \neq 0 $ in the following.  
\begin{itemize}
	\item If $ 0 < t < 1 $, then, $ E $ converges to $ F $ from either left or right, depending on the sign of $ C $.
	\item Remark that $ t \neq 1 $.
	\item If $ t > 1 $, then, $ E $ diverges as $ F+C, F+tC, F+t^2C, \ldots $.
	\item If $ -1 < t < 0 $, then $ E $ converges from both side.
	\item If $ t = -1 $, then $ E=F+C $ for the strings with even length and $ E=F-C $ for the strings with odd length.
	\item If $ t < -1 $, then $ E $ diverges as $ F+C, F-tC, F+t^2C,F-t^3C, F+t^4C,\ldots $.
\end{itemize}
A careful analysis of the graph of Figure \ref{fig:accept-prob} show that $ M $ recognizes all languages recognized by 2-state unary PFAs with cutpoint and the followings:
	$ \interval{k}{l} \cap \even $, 
	$ \interval{k}{l} \cap \overline{\even} $, 
	$ \overline{\interval{k}{l}} \cap \even $, and
	$ \overline{\interval{k}{l}} \cap \overline{\even} $.
\qed
\end{proof}

With Lemma \ref{lem:char} we conclude the proof of Theorem \ref{the:cha}.

\section{Concluding Remarks}

Affine computation and affine finite automata were introduced in \cite{DCY16A} with a few initial results. For example, it was proved that AfAs can recognize more languages than PFAs and QFAs in the bounded and unbounded error modes, the exclusive affine languages form a superset of the exclusive quantum and stochastic languages, and nondeterministic AfAs and QFAs are both equivalent to the class of exclusive stochastic languages. 

In this paper, we continued to investigate AfAs and obtained some new and complementary results. We presented efficient simulations of PFAs and QFAs by AfAs. In addition, we characterized the class of languages recognized by positive and negative one-sided bounded-error AfAs using rational transitions, which turn out to be equal to the union of rational exclusive and co-exclusive stochastic languages; this latter result improved the proof of equivalence between nondeterministic AfAs and QFAs. We also initiated the study of the state complexity of AfAs and showed that they can be more succint than PFAs and QFAs. Finally, we presented a complete characterization of 2-state unary AfAs, showing at the same time that AfAs can recognize more languages than 2-state unary PFAs  but still only regular languages.

In a recent and related work on AfAs \cite{BMY16A}, some further results on state complexity are presented. In \cite{BMY16A}, it was proved that AfAs can separate any pair of strings with zero-error using only two states, and can separate efficiently any pair of disjoint finite sets of words with one-sided bounded-error.

We close this paper with a few open problems that we consider challenging.
\begin{enumerate}
	\item It is conjectured in \cite{DCY16A} that affine and quantum computation can be incomparable. The simulation results in this paper give the feeling that quantum models can be simulated by their affine counterparts but it might require a quadratic increase in memory. It is interesting to study the relations, particularly in the bounded-error setting, between quantum and affine language classes.
	\item Currently we are not aware of any non-trivial upper bound for $ \baffine_\mathbb{Q} $. Using the techniques of \cite{Mac98} it might be possible to prove an upper bound of logarithmic space.
	\item Considering that AfAs completely capture the power of NQFAs, it is interesting to investigate lower bound techniques that can exploit the simpler structure of affine transformations (compared to unitary and positive-semidefinite operators).
\end{enumerate}

\section*{Acknowledgement} 
We thank to the anonymous referees for their helpful comments.

\bibliographystyle{splncs03}
\bibliography{tcs}

\newpage
\appendix

\section{All Regular Unary Languages Recognized by Two-state AfAs in the Case $p+q=0$}\label{sec:pq}
On input $a^j$, the final configuration state is
\[
v_f=\begin{pmatrix}
f_1	& f_2\\
1-f_1 ~	& 1-f_2
\end{pmatrix}
\begin{pmatrix}
m+j\cdot p\\
1-m-j\cdot p
\end{pmatrix}
=
\begin{pmatrix}
m f_1+j p f_1+f_2-m f_2-j p f_2\\
\bar{1}
\end{pmatrix}
\]
where $\bar{1} = 1-(m f_1+j p f_1+f_2-m f_2-j p f_2)$. If we let $F=m(f_1-f_2)+f_2$ and $C=pf_1-pf_2$, then we can write
\[
v_f=\begin{pmatrix}
m(f_1-f_2)+f_2+j(pf_1-pf_2)\\
\bar{1}
\end{pmatrix}
=
\begin{pmatrix}
F+jC\\
\bar{1}
\end{pmatrix}
\]
For convenience, we let $E=F+jC$.

If $C=0$, then $f_M$ is a constant function and $M$ recognizes either $E$ or $\emptylang$. From now on we assume that $C\neq 0$.

Let the cutpoint $\lambda<1/2$. 
\begin{itemize}
	\item If $E<\frac{\lambda}{2\lambda-1}$ and $C$ is negative, then $f_M(x)>\lambda$ for all $x$, and hence, $M$ recognizes $\E$; if $C$ is positive, then $M$ recognizes $\overline{\interval{k}{l}}$ for some $k>0$ and $l\geq 0$.
	\item If $E=\frac{\lambda}{2\lambda-1}$ and $C$ is negative, then $M$ accepts all strings except the empty string; if $C$ is positive, then $M$ accepts all strings except the first $j$ strings, for some $j\geq 0$, and thus, $M$ recognizes $\overline{\less{n}}$.
	\item If $\frac{\lambda}{2\lambda-1} <E<\lambda$, then $M$ recognizes $\overline{\less{n}}$ independent of $C$. 
	\item If $E=\lambda$ and $C$ is negative, then $M$ recognizes $\overline{\less{n}}$; if $C$ is positive, then $M$ accepts all strings except the empty string.
	\item If $E>\lambda$ and $C$ is negative, then the first $k$ strings are not in the language and also all strings after the $j$-th string, hence, $M$ recognizes $\overline{\interval{k}{l}}$; if $C$ is positive, then $M$ recognizes $\E$.
\end{itemize}
   
Now let the cutpoint $\lambda=1/2$. 
\begin{itemize}
	\item If $E\leq \lambda$ and $C$ is negative, then $M$ recognizes $\emptylang$; if $C$ is positive, then $M$ recognizes all strings except the empty string.
	\item If $E>\lambda$ and $C$ is negative, then $M$ recognizes $\less{n}$; if $C$ is positive, then $M$ recognizes $\E$.
\end{itemize}

For the last case, we assume that $\lambda >1/2$. 
\begin{itemize}
	\item If $E\leq \lambda$ and $C$ is negative, then $M$ recognizes $\emptylang$; if $C$ is positive, $M$ recognizes $\interval{k}{l}$.
	\item If $\lambda < E <\frac{\lambda}{2\lambda-2}$, then $M$ recognizes $\less{j}$ independent of $C$. If $E=\frac{\lambda}{2\lambda-2}$ and $C$ is negative, then  $M$ skips the empty string, accepts the first $j$ strings and omits the rest; if $C$ is positive, then $M$ recognizes $\emptylang$.
	\item If $E>\frac{\lambda}{2\lambda-2}$ and $C$ is negative, then $M$ recognizes $\interval{k}{l}$; if $C$ is positive, then $M$ recognizes $\emptylang$.
\end{itemize}

\end{document}